\pgfplotsset{compat = newest}
\def\,{\mskip 3mu} \def\>{\mskip 4mu plus 2mu minus 4mu} \def\;{\mskip 5mu plus 5mu} \def\!{\mskip-3mu}
\def\dispmuskip{\thinmuskip= 3mu plus 0mu minus 2mu \medmuskip=  4mu plus 2mu minus 2mu \thickmuskip=5mu plus 5mu minus 2mu}
\def\textmuskip{\thinmuskip= 0mu                    \medmuskip=  1mu plus 1mu minus 1mu \thickmuskip=2mu plus 3mu minus 1mu}
\def\beq{\dispmuskip\begin{equation}}    \def\eeq{\end{equation}\textmuskip}
\def\beqn{\dispmuskip\begin{displaymath}}\def\eeqn{\end{displaymath}\textmuskip}
\def\bqa{\dispmuskip\begin{eqnarray}}    \def\eqa{\end{eqnarray}\textmuskip}
\def\bqan{\dispmuskip\begin{eqnarray*}}  \def\eqan{\end{eqnarray*}\textmuskip}
\newtheorem{theorem}{Theorem}
\newtheorem{lemma}[theorem]{Lemma}
\newtheorem{proposition}[theorem]{Proposition}
\newenvironment{proof}{{\noindent\bf Proof.}}{\vskip 1ex}
\newenvironment{keywords}{\centerline{\bf\small
Keywords}\begin{quote}\small}{\par\end{quote}\vskip 1ex}
\newdimen\paravsp  \paravsp=1.3ex 
\def\paradot#1{\vspace{\paravsp plus 0.5\paravsp minus 0.5\paravsp}\noindent{\bf\boldmath{#1.}}} 
\def\qmbox#1{{\quad\mbox{#1}\quad}} 
\def\req#1{(\ref{#1})}          
\def\epstr{\epsilon}            
\def\nq{\hspace{-1em}}          
\def\qed{\hspace*{\fill}\rule{1.4ex}{1.4ex}$\quad$\\} 
\def\fr#1#2{{\textstyle{#1\over#2}}} 
\def\frs#1#2{{^{#1}\!/\!_{#2}}} 
\def\e{{\rm e}}                 
\def\v{\boldsymbol}             
\def\G{\Gamma}                  
\def\a{\alpha}
\def\g{\gamma}
\def\t{\theta}
\def\A{{\cal A}}                
\def\X{{\cal X}}                
\def\newx{{\cal N}\!\text{\it ew}}
\def\oldx{{\cal O}\text{\it ld}}
\def\b{\beta}
\def\CL{\ensuremath{\text{CL}}}
\def\M{{\cal M}}
\def\KT{\ensuremath{\text{KT}}}
\def\Bayes{\ensuremath{\text{Bayes}}}
\def\DirM{\ensuremath{\text{DirM}}}
\def\d{{\rm d}} 
\def\nb{\bar\nu}
\def\Rub{\smash{\overline{R}}}
\def\Rlb{\smash{\underline{R}}}
\def\vbs{{\smash{\vec\b^*\!}}} 
\begin{document}

\title{\vspace{-4ex}
\vskip 2mm\bf\Large\hrule height5pt \vskip 4mm
Sparse Adaptive Dirichlet-Multinomial-like Processes
\vskip 4mm \hrule height2pt}
\author{{\bf Marcus Hutter}\\[3mm]
\normalsize Research School of Computer Science \\[-0.5ex]
\normalsize Australian National University \\[-0.5ex]
\normalsize Canberra, ACT, 0200, Australia \\
\normalsize \texttt{http://www.hutter1.net/}
}
\date{May 2013}
\maketitle

\begin{abstract}
Online estimation and modelling of i.i.d.\ data for short
sequences over large or complex ``alphabets'' is a ubiquitous
(sub)problem in machine learning, information theory, data
compression, statistical language processing, and document
analysis. The Dirichlet-Multinomial distribution (also called
Polya urn scheme) and extensions thereof are widely applied for
online i.i.d.\ estimation. Good a-priori choices for the
parameters in this regime are difficult to obtain though. I
derive an optimal adaptive choice for the main parameter via
tight, data-dependent redundancy bounds for a related model. The
1-line recommendation is to set the `total mass' = `precision' =
`concentration' parameter to $m/[2\,\ln{n+1\over m}]$, where $n$
is the (past) sample size and $m$ the number of different symbols
observed (so far).
The resulting estimator
(i) is simple,
(ii) online,
(iii) fast,
(iv) performs well for all $m$, small, middle and large,
(v) is independent of the base alphabet size,
(vi) non-occurring symbols induce no redundancy,
(vii) the constant sequence has constant redundancy,
(viii) symbols that appear only finitely often have bounded/constant contribution to the redundancy,
(ix) is competitive with (slow) Bayesian mixing over all sub-alphabets.
\vspace{4ex} 
\def\contentsname{\centering\normalsize Contents}\setcounter{tocdepth}{1}
{\parskip=-2.7ex\tableofcontents}
\end{abstract}

\vspace*{-2ex}
\begin{keywords} 
sparse coding; adaptive parameters; Dirichlet-Multinomial;
Polya urn; data-dependent redundancy bound;
small/large alphabet; data compression.
\end{keywords}

\section{Introduction}\label{sec:Intro}

The problem of estimating or modelling the probability distribution
of data sequences sampled from an unknown source is central in
machine learning \cite{Bishop:06}, information theory
\cite{Cover:06}, and data compression \cite{Mahoney:12}. I consider
the case where the data items are complex and/or are drawn from a
large space. Many approaches to language modelling and document
analysis \cite{Manning:99} fall into this regime, where data items
are words. Typical documents comprise a small fraction of the
available 100'000+ English words, and words have different
length/complexity/frequency.

\paradot{Online estimation of i.i.d.\ data}
More formally, I consider i.i.d.\ data with base alphabet $\X$ much
larger than the sequence length, which implies that only a small
fraction of symbols (which in case of text are words) appear in the
sequence. I focus on online algorithms that at any time can predict
the probability of the next symbol given only the past sequence and
without knowing the actually used alphabet $\A$ and/or symbol
occurrence frequencies in advance.

While real-word data like text are often not i.i.d, i.i.d.\
estimators are often a key component of more sophisticated models.
For instance, in $n$-gram models, the subsequence of words that
have the same length-$n$ context is (assumed) i.i.d. Since these
subsequences can be very short, good i.i.d.\ estimators for short
sequences and huge alphabet are even more important. The same holds
for variable-order models like large-alphabet context tree
weighting \cite{Tjalkens:93}, and in addition, the employed i.i.d.\
estimators need to be online.

\paradot{Performance measures}
Performance can be measured in many different ways: %
code length \cite{Cover:06}, %
perplexity \cite{Manning:99}, %
redundancy \cite{Wallace:05}, %
regret \cite{Gruenwald:07book}, %
and others. %
The most wide-spread (across disciplines) performance measures are
transformations of the (estimated) data likelihood(s). If
$Q(x_{1:n})$ is the estimated probability of sequence
$x_{1:n}\equiv x_1...x_n$, then $\log 1/Q(x_{1:n})$ is the optimal
code length and $Q(x_{1:n})^{1/n}$ the perplexity of $x_{1:n}$. If
$P$ is some reference measure, then $\log 1/Q - \log 1/P$ is the
redundancy of $Q$ relative to $P$. For log-loss, this is also its
regret, though many variations are used. Many other performance
measures can be upper bounded by (expected) code length
\cite{Hutter:03optisp}. I therefore concentrate on
$-$log-likelihood = code length and redundancy.

\paradot{Dirichlet-multinomial and parameter choice}
The Dirichlet-multinomial distribution is defined as
$\DirM(x_{n+1}=i|x_{1:n})={n_i+\a_i\over n+\a_+}$, which can be
motivated in many ways, e.g.\ by the Polya urn scheme or as below.
This process and extensions thereof like the Pitman-Yor process are
widely studied and applied \cite{Hutter:10pdpx}, in particular for
language processing and document analysis.
Theoretically motivated choices for the Dirichlet parameters $\a_i$ are %
$\a_i=\frs12$ for the Krichevsky-Trofimov (KT) estimator \cite{Krichevsky:81}
and Jeffreys/Bernardo/MDL/MML prior \cite{Jeffreys:46,Jeffreys:61,Bernardo:79,Gruenwald:07book,Wallace:05}, %
$\a_i=0$ for Frequentist and Haldane's prior \cite{Haldane:48}, %
$\a_i=1$ for the uniform/indifference/Bayes/Laplace prior \cite{Bayes:1763,Laplace:1812}, %
and $\a_i=1/|\X|$ for Perks' prior \cite{Perks:47}.%
They are all problematic for large base alphabet $\X$, so is sometimes
optimized or sampled experimentally or averaged with a hyper-prior.
The following table summarizes these choices:
\beq\label{tab:aichoices}
\begin{array}{c||c|c|c|c||c}
  \text{Dirichlet}    & \text{Laplace} & \text{KT\&others}  & \text{Perks} & \text{Haldane} & \text{Hutter} \\ \hline
  \a_i=\displaystyle{\a_+\over|\X|}\rule{0ex}{4ex} & 1 & \displaystyle{1\over 2} & \displaystyle{1\over|\X|} & 0 & \displaystyle {m\over 2|\X|\ln{n+1\over m}}
\end{array}
\eeq
The last column is a glimpse of the results in this paper, where
$m$ is the number of different symbols that appear in $x_{1:n}$.
For continuous spaces $\X$, the Dirichlet process is usually
parameterized by a base distribution $H()$ and a critical
concentration parameter $\b\widehat=\a_+$.

\paradot{Main contribution}
In this paper I introduce an estimator $S$ [Eq.\req{eq:Sdef}],
which essentially estimates the probability of the next symbol by
its past relative frequency, but reserves a small (or large!)
``escape'' probability to new symbols that have not appeared so
far. Such escape mechanisms are well-known and used in data
compression such as prediction by partial match (PPM)
\cite{Clearly:84,Mahoney:12}. This is (somewhat) different from how
the Dirichlet-multinomial regularizes zero frequency with $\a_i>0$
or $\b>0$.

The main contribution is to derive an ``optimal'' escape parameter
$\b^*$ [Eq.\req{eq:bstar}~offline and Eq.\req{eq:bts}~online]. The
key to improve upon existing estimators like the minimax optimal KT
estimator is to consider data-dependent redundancy bounds, rather
than expected or worst-case redundancy, and find its minimizing
$\b$. While the KT estimator and many of its companions have
$\fr12\log n$ redundancy per symbol in $\X$, whether the symbol
occurs in the sequence or not, our new estimator $S^{\b^*}$ suffers
zero redundancy for non-occurring symbols, and essentially only
$\fr12\log n_i+O(1)$ for symbols $i$ appearing $n_i$ times. This is
never much worse and often significantly better than KT.
This also leads to an ``optimal'' variable Dirichlet
parameter $\vbs$. While knowing $\vbs$ is practically useful, the
derived redundancy bounds themselves are of theoretical interest.

\paradot{Contents}
After establishing notation in Section~\ref{sec:Prelim}, I motivate
and state my primary model $S^\b$ in Section~\ref{sec:Model}.
I derive exact expressions and upper and lower bounds for the
redundancy of $S^\b$ for general constant $\b$ in
Section~\ref{sec:Sb}, and show how they improve upon the minimax
redundancy.
I approximately minimize the redundancy w.r.t.\ $\b$ in
Section~\ref{sec:Sbstar}. There are various regimes for the optimal
$\b^*$ and the used alphabet size $|\A|$, even with negative
redundancy.
To convert this into an online model, I make $\b^*$ time-dependent
in Section~\ref{sec:Sbt}, causing very little extra redundancy.
In Section~\ref{sec:Compare} I theoretically compare my models to
the Dirichlet-multinomial distribution, and Bayesian sub-alphabet
weighting.
Section~\ref{sec:Disc} concludes.

Proofs of the lower and two upper bounds can be found in
Appendices~\ref{app:redSbLBnob}, \ref{app:redSbstar}, and \ref{app:redSbt}, %
a derivation of $\b^*$ in Appendix~\ref{app:bopt} %
with improvements in Appendix~\ref{app:bimpr}, %
details of Bayesian subset-alphabet weighting in Appendix~\ref{app:SAW}, %
algorithmic considerations in Appendix~\ref{app:Comp}, %
and an experimental evaluation in Appendix~\ref{app:Exp}. %
Used properties of the (di)Gamma functions can be found in Appendix~\ref{app:Gamma}, %
and a list of used notation in Appendix~\ref{app:Notation}.

\section{Preliminaries}\label{sec:Prelim}

All global notation is introduced in this section and summarized in
Appendix~\ref{app:Notation}.

\paradot{Base alphabet ($\X$, $D$)}
Let $\X$ be the {\em base} alphabet of size $D=|\X|$ from which a
sequence of symbols is drawn. If not otherwise mentioned, I assume
$\X$ to be finite. I have a large base alphabet in mind, but this
is not a technical requirement. The alphabet could literally
consist of e.g.\ ASCII symbols, could be the set of (over 100'000)
English words, or just bits $\{0,1\}$.
Indeed, even finiteness of $\X$ is nowhere crucially used and all
results generalize easily to countable and even continuous $\X$ as
we will see.

\paradot{Total sequence ($n$, $x_{1:n}$, $n_i$)}
I consider sequences $x_{1:n}\equiv (x_1,...,x_n)\in\X^n$ of length
$n$ drawn from $\X$. Let $n_i$ be the number of times, $i$ appears
in $x_{1:n}$. I have in mind that the sequences are sampled
independent and identically distributed (i.i.d), but I actually
never use this assumption. All results in this paper hold for any
individual fixed sequence $x_{1:n}$, and only depend on the order
statistics $\v n=(n_i)_{i\in\X}$.
The crucial parameters are $n$, $D$, the number $m$ of non-zero
counts, and model parameter $\b$ introduced later, which induces
several different regimes, second by the counts $n_i$.

\paradot{Used alphabet ($m$, $\A$, $i,j,k$, $\nu$, $\nb$)}
Only a subset of symbols $\A:=\{x_1,...,x_n\}\subseteq\X$ may
actually appear in a sequence $x_{1:n}$. Our model is primarily
motivated for the regime where the number $m=|\A|$ of {\em used}
symbols is much smaller than $D=|\X|$, as e.g.\ any English text
uses only a small fraction of all possible words.
It turns out that our model can be tuned to actually perform very
well for all possible $1\leq m\leq\min\{n,D\}$: %
constant sequences ($m=1$), %
every symbol appearing only once ($m=n$), %
and all available symbols appear ($m=D$).
Indices $i,j,k$ are understood to range respectively over symbols
in $\X$, $\A$, and $\X\setminus\A$. Without loss of generality I
can assume $i\in\X=\{1,...,D\}$, $j\in\A=\{1,...,m\}$, and
$k\in\X\setminus\A=\{m+1,...,n\}$. I also use $\nb:=n/m$ for the
average multiplicity of symbols in $x_{1:n}$, and $\nu:=m/n$ is its
inverse.

\paradot{Current sequence and observed alphabet
($t$, $x_{1:t}$, $\A_t$, $m_t$, $x_{t+1}$, $n^t_i$, $\newx$, $\oldx$)}
Let $t$ be the current time ranging from $0$ to $n-1$, with
$x_{1:t}$, $\A_t:=\{x_1,...,x_t\}$ and $m_t=|\A_t|$ being
respectively, the sequence, symbols, and number of different
symbols observed so far, and as usual $x_{1:0}=\epstr$ is the empty
string and $\A_0=\{\}$ the empty set. The next symbol to be
predicted or coded is $x_{t+1}=i$. Either $x_{t+1}$ is a new symbol
or an ``old'' symbol. Let $\newx:=\{t=0...n-1:
x_{t+1}\not\in\A_t\}$ and $\oldx:=\{t=0...n-1: x_{t+1}\in\A_t\}$ be
the sets of times for which the next symbol is new/old. Note that
$|\newx|=|\A|=m$. Finally, let $n^t_i$ be the number of times, $i$
appears in $x_{1:t}$.
Note that most inroduced quantities $*_t$ depend on $x_{1:t}$,
but since I consider an (arbitray but) fixed sequence $x_{1:n}$
it is safe to suppress this dependence in the notation.

\paradot{Probability and exchangeability and logarithms
($P$, $Q$, $P^{\rm param}_{\rm name}$, $\ln$)}
$P$ and $Q$ will denote generic probability distributions over sequences,
and $P^{\rm param}_{\rm name}$ specific parameterized and named ones.
For instance, $P^{\v\t}_{iid}$ denotes the model in which symbols are i.i.d.\ with $P(x_t=i)=\t_i$.
Our primary prediction/compression models defined below are $S$, $S^{\b^*}\!\!$, and $S^\vbs$.
%
A distribution $P(x_{1:n})$ is called exchangeable if it is
independent of the order of the symbols in a sequence $x_{1:n}$.
Many distributions have this desirable property \cite{DeFinetti:74}.
%
Since the natural logarithm is mathematically more convenient,
I express all results in `nits' rather than bits.
Conversion to bits is trivial by dividing results by $\ln 2$.

\section{The Main Model}\label{sec:Model}

I am now ready to motivate and formally state our primary model.

\paradot{Derivation of my main model}
My main model is defined via predictive distributions
$S(x_{t+1}|x_{1:t})$ for $t=0...n-1$. If $i$ has appeared $n^t_i$
times in $x_{1:t}$, it is natural to use the past relative
frequency $n^t_i/t$ as the predictive probability that the next
symbol $x_{t+1}$ is $i$. The problems with this are well-known and
obvious: It assigns probability zero and hence infinite log-loss or
code length to any symbol that has not yet been observed.
This problem can be solved by reserving some small (or not so
small) ``escape'' probability $\a_t$ that the next symbol $x_{t+1}$
is new, taken from $n^t_i/t$ by lowering it to $(1-\a_t)n^t_i/t$.
I have to somehow distribute the probability $\a_t$ among the new
symbols $x_{t+1}\in\X\setminus\A_t$. The simplest choice would be
uniform. More generally assign probability $\a_t w^t_k$ to
$k=x_{t+1}\in\X\setminus\A_t$ with $\sum_{k\in\X\setminus\A_t}
w^t_k\leq 1$ and $w^t_k>0$.

One can show that the ansatz above for time-independent weights
leads to an exchangeable distribution if and only if
$\a_t=\b/(t+\b)$ for some constant $\b\geq 0$.

\paradot{Main model}
This motivates our main model
\beq\label{eq:Sdef}
  S(x_{t+1}=i|x_{1:t}) ~:=~ \left\{
  \begin{array}{ccc}
    \displaystyle {n^t_i\over t+\b} & \qmbox{for} & n^t_i>0 \\[2ex]
    \displaystyle {\b w^t_i\over t+\b} & \qmbox{for} & n^t_i=0
  \end{array} \right.
\eeq
for $t=0...n-1$. Note that $S(x_1=i)=w^0_i$ is independent of $\b>0$.
The case conditions can also be written as
$[n^t_{x_{t+1}}>0]\,\equiv\,[x_{t+1}\in\A_t]\,\equiv\,[t\in\oldx]$
and
$[n^t_{x_{t+1}}=0]\,\equiv\,[x_{t+1}\not\in\A_t]\,\equiv\,[t\in\newx]$.
Other motivations and relations to other estimators are given in
Section~\ref{sec:Compare}.

\paradot{Sub-probability}
In general, $\sum_{i\in\X}S(x_{t+1}=i|x_{1:t})\leq 1$, but not
necessarily $=1$. Such sub-probabilities are benign extensions for
many purposes including ours. It is always possible to increase
sub-probabilities to proper probabilities. For $S$ we could replace
$w^t_i$ by $w^t_i/\sum_{k\in\X\setminus\A_t} w^t_k$ as long as
$\X\setminus\A_t$ is not empty, and replace $\b$ by $0$ if ever all
base symbols ($m_t=D$) have appeared. Note that unless $m_t=D$, we
have to assume $\b>0$ to avoid the problems of frequentist
estimation.

\paradot{Sequence probability}
The probability our model assigns to sequence $x_{1:n}$ is
\bqa
  S(x_{1:n}) &=& \prod_{t=0}^{n-1} S(x_{t+1}|x_{1:t})
  ~=~ \prod_{t=0}^{n-1}{1\over t+\b} \prod_{t\in\oldx}n^t_{x_{t+1}} \prod_{t\in\newx}\b w^t_{x_{t+1}} \label{Sjoint0}\\
  &=& \b^{|\A|}  {\G(\b)\over\G(n+\b)} \prod_{t\in\newx}w^t_{x_{t+1}} \prod_{j\in\A}\G(n_j) \label{Sjoint}
\eqa
where $\G$ is the Gamma function.
The symbol count $n^t_j$ increases by 1 for each occurrence of $j$ in the sequence.
Therefore $\prod_{t\in\oldx:x_{t+1}=j}n^t_j=1\cdot...\cdot(n_j-1)=\G(n_j)$,
which establishes the second line.

\section{\boldmath Redundancy of $S^\b$ for General $\b$}\label{sec:Sb}

In this section I motivate and define the concepts of redundancy
and (log-loss) regret and present an exact expression for the
redundancy of $S^\b$ for general constant $\b$. Upper and
lower bounds are easily derived by bounding the involved
Gamma functions. Finally I discuss the $\b$-independent terms in the bound,
and how they improve upon the minimax redundancy.

\paradot{Code length and redundancy/regret}
If a data sequence is sampled from some distribution $P$,
then a lower bound on the expected code length is the entropy $H(P)$ of the source $P$,
which can only be achieved by an encoder which encodes
sequences $x_{1:n}$ in $-\ln P(x_{1:n})$ nits \cite{Shannon:48}.

Arithmetic encoding \cite{Rissanen:76,Witten:87} can (efficiently
and online) achieve this lower bound within 2 bits.
It is therefore appropriate to call
\beqn
  \CL_P(x_{1:n}):=\ln 1/P(x_{1:n})
\eeqn
the (optimal) code length of $x_{1:n}$ (in nits w.r.t.\ $P$).
Arithmetic coding also works for sub-probabilities.

Usually, $P$ is unknown, and one aims at compressors getting close
to $\CL_P$ for all $P$ that might be ``true'' and/or for all $P$
for which it is feasible to do so. Let $\M=\{P\}$ be such a class
of interest; then $\min_{P\in\M}\CL_P(x_{1:n})$ is an (infeasible)
lower bound on the best possible coding if $x_{1:n}$ is sampled
from {\em some} $P\in\M$.

Most modern compressors are themselves based on a (predictive)
distribution $Q$ used together with arithmetic coding
\cite{Mahoney:12}. This motivates the concept of {\em redundancy}
or {\em regret} $R$ as a performance measure for $Q$, which I
define as the difference in code length between the data coded with
predictor $Q$ and the infeasible optimal code length in hindsight:
\beq\label{eq:redDef}
  R_Q(x_{1:n}) ~:=~ \CL_Q(x_{1:n}) - \min_{P\in\M}\CL_P(x_{1:n})
  ~=~ \ln {\max_{P\in\M} P(x_{1:n}) \over Q(x_{1:n})}
\eeq
For comparing the code lengths of different $Q$, any quantity from
which $\CL_Q$ can easily be recovered could be studied: log-loss
regret $\CL_Q-\CL_P$ or redundancy $\CL_Q-H(P)$ where $P$ is the
true distribution of entropy $H(P)$, or $\CL_Q-c$ for any other
``constant'' $c$ independent of $Q$, and of course code length
$\CL_Q$ itself. The redundancy $R_Q$ w.r.t.\ class $\M$ defined
above ($c=\min_{P\in\M}\CL_P(x_{1:n})$) is just often and also here
the most convenient choice.
Upper and lower bounds on redundancies will be denoted by
$\overline R$ and $\underline R$.

\paradot{I.i.d.\ reference class}
As reference class $\M$ I choose the class of i.i.d.\ distributions
with symbol $i\in\X$ having probability $\t_i\in[0;1]$.
\beqn
  P^{\v\t}_{iid}(x_{1:n}) ~:=~ \t_{x_1}\cdot...\cdot\t_{x_n}
  ~=~ \prod_{i\in\X}\t_i^{n_i} ~=~ \prod_{j\in\A}\t_j^{n_j}
\eeqn
The maximum is attained at $\t_i=\hat\t_i:=n_i/n$; therefore
\beq\label{IIDjoint}
  P^{\v{\hat\t}}_{iid}(x_{1:n}) ~=~ \max_{\v\t} P^{\v\t}_{iid}(x_{1:n}) ~=~ n^{-n}\prod_{j\in\A}{n_j^{n_j}}
\eeq

\paradot{Redundancy of $S$}
Subtracting the logarithm of \req{Sjoint} from the logarithm of
\req{IIDjoint} and using abbreviation
$\CL_w(\A):=\sum_{t\in\newx}\ln(1/ w^t_{x_{t+1}})$ discussed below,
one can represent the redundancy of $S$ as follows:

\begin{proposition}[Redundancy of $S$ for constant $\b$]\label{prop:redSb}
For any constant $\b>0$, the redundancy of $S^\b$ relative to the i.i.d.\
class $\M=\{P^{\v\t}_{iid}\}$ can be represented exactly and bounded as follows:
\bqa
  & & \nq\nq \Rlb^\b_S(x_{1:n}) ~\leq~ R^\b_S(x_{1:n}) ~\leq~ \Rub^\b_S(x_{1:n}),\qmbox{where} \nonumber \\
  & & \nq\nq R^\b_S ~=~ \CL_w(\A) - m\ln\b  + \sum_{j\in\A}\ln{n_j^{n_j}\over\G(n_j)} + \ln{\G(n\!+\!\b)\over n^n\,\G(\b)} \label{eq:redSb} \\
  & & \nq\nq \Rub^\b_S ~:=~ \CL_w(\A) - m\ln\b + \sum_{j\in\A}\fr12\ln{n_j\over 2\pi} + n\ln(1\!+\!{\b\over n}) + (\b\!-\!\fr12)\ln({n\over\b}\!+\!1) + 0.082  \label{eq:redSbUB} \\
  & & \nq\nq \Rlb^\b_S ~:=~ \Rub^\b_S - 0.082(m\!+\!2) \label{eq:redSbLB}
\eqa
where $\A\subseteq\X$ are the $m$ (a-priori unknown) symbols
appearing in $x_{1:n}\in\X^n$. The lower bound only holds for $\b\geq 1$.
The 0.082 is actually $1-\ln\sqrt{2\pi}$.
\end{proposition}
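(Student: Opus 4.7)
My plan is first to derive the exact identity~\req{eq:redSb} directly from the sequence probabilities~\req{Sjoint} and~\req{IIDjoint}, and then to obtain~\req{eq:redSbUB} and~\req{eq:redSbLB} by replacing the three Gamma contributions in~\req{eq:redSb} by their Stirling expansions with explicit remainder control. The exact step is pure algebra: from $\ln S(x_{1:n}) = m\ln\b + \ln\G(\b) - \ln\G(n+\b) - \CL_w(\A) + \sum_{j\in\A}\ln\G(n_j)$ and $\ln P^{\v{\hat\t}}_{iid}(x_{1:n}) = -n\ln n + \sum_{j\in\A}n_j\ln n_j$, subtracting yields the claimed form, with $\sum_{t\in\newx}\ln w^t_{x_{t+1}}$ absorbed into $-\CL_w(\A)$ by definition.

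For the bounds I would use the Binet form $\ln\G(z) = (z-\fr12)\ln z - z + \ln\sqrt{2\pi} + R(z)$, where $R(z)\geq 0$ for all $z>0$ and $R$ is decreasing on $[1,\infty)$ with $R(1) = 1 - \ln\sqrt{2\pi} < 0.082$. Substituting into~\req{eq:redSb}, each summand $\ln(n_j^{n_j}/\G(n_j))$ expands to $\fr12\ln(n_j/(2\pi)) + n_j - R(n_j)$, and $\ln(\G(n+\b)/(n^n\G(\b)))$ expands to $n\ln(1+\b/n) + (\b-\fr12)\ln(n/\b+1) - n + R(n+\b) - R(\b)$. Because $\sum_{j\in\A}n_j = n$, the $+n$ and $-n$ cancel, leaving
\beqn
  R^\b_S \;=\; \Rub^\b_S - 0.082 + R(n+\b) - R(\b) - \sum_{j\in\A}R(n_j).
\eeqn
The upper bound~\req{eq:redSbUB} now follows by dropping the non-positive terms $-R(\b)$ and $-\sum_j R(n_j)$ and invoking $R(n+\b)\leq R(1)<0.082$. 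For the lower bound, assuming $\b\geq 1$, each of the $m+1$ subtracted remainders $R(\b), R(n_1),\dots,R(n_m)$ lies in $[0,R(1)]\subset[0,0.082)$, while the non-negative $R(n+\b)$ is simply discarded, which yields $R^\b_S \geq \Rub^\b_S - 0.082(m+2) = \Rlb^\b_S$.

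The main obstacle is precisely the $\b\geq 1$ hypothesis in the lower bound. The needed uniform bound $R(\b)\leq 0.082$ requires $\b\geq 1$ because $\G(z)\sim 1/z$ as $z\downarrow 0$ forces $R(z)\to\infty$, so $-R(\b)$ cannot be controlled by a constant for small $\b$. The upper bound remains unconditional because only the sign property $R(\b)\geq 0$ is used there, which holds for all $\b>0$. The only numerically delicate inputs are $R(1) = 1-\ln\sqrt{2\pi}<0.082$ and the monotonicity of $R$ on $[1,\infty)$; both are standard properties of Binet's remainder, which I would invoke from Appendix~\ref{app:Gamma} rather than rederive.
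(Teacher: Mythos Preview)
Your proposal is correct and is essentially the paper's own argument. The paper derives~\req{eq:redSb} by subtracting the logarithms of~\req{Sjoint} and~\req{IIDjoint}, and then obtains~\req{eq:redSbUB} and~\req{eq:redSbLB} by inserting the two-sided Stirling bounds~\req{eq:GammaULB}, which in your notation are precisely $0\leq R(z)$ for $z>0$ and $R(z)\leq 1-\ln\sqrt{2\pi}$ for $z\geq 1$; the cancellation of $\sum_j n_j = n$ and the role of $\b\geq 1$ for the lower bound are exactly as you describe.
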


The exact expression follows easily by rearranging terms in
\req{Sjoint} and \req{IIDjoint}. The bounds follow from this by
inserting the upper and lower bounds \req{eq:GammaULB} on the Gamma
function and collecting/cancelling matching terms. As can be seen,
the upper and lower bounds only differ by $0.082(m+2)$, hence are
quite tight for small $m$, but loose for large $m$.

In the following paragraphs I discuss the two $\b$-independent
terms. The $\b$-dependent terms will be discussed in the next
section. Note that the following interpretation of \req{eq:redSb}
only refers to code length. The actual way how arithmetic coding
works is very different from this ``naive'' interpretation of the
origin of the different terms in \req{eq:redSb}.

\paradot{Code length of used alphabet $\A$} 
The first term in the redundancy \req{eq:redSb}
\beq\label{eq:CLA}
  \CL_w(\A) ~:=~ \sum_{t\in\newx}\ln(1/ w^t_{x_{t+1}})
\eeq
can be interpreted as follows:
Whenever we see a new symbol $x_{t+1}\not\in\A_t$,
we need to code the symbol itself. This can be done in
$\ln(1/ w^t_{x_{t+1}})$ nits, which together leads to
code length \req{eq:CLA} for the used alphabet $\A$.

A natural choice for the new symbol weights is the uniform
distribution $w^t_i=1/D$ with $\CL_w(\A)=m\ln D$. Since at time $t$
there are only $D-m_t$ new symbols left, we could use normalized
uniform weights $w^t_k=1/(D-m_t)$ with smaller
\beq\label{eq:wUN}
  \CL_w(\A) ~=~ \ln(D)+...+\ln(D-m+1) ~=~ \ln[D!/(D-m)!]
\eeq
For large, structured, and/or infinite alphabet, a more natural
choice is $w^t_i=\exp(-\CL(i))$ with
\beq\label{eq:wCL}
  \CL_w(\A) ~=~ \sum_{t\in\newx}\CL(x_{t+1}) ~=~ \sum_{j\in\A}\CL(j)
\eeq
were new symbols $j$ are {\em somehow} coded (prefix-free) in
$\CL(j)$ nits. For intstance if $\X$ consists of English words,
each word $i$ with $\ell$ letters could be represented as a
byte-string of length $\ell$ plus a 0 terminating byte, hence
$\CL(i)=8\ell+8$.
%
Choice \req{eq:wCL} is interesting since it makes the redundancy
completely independent of the size of the base alphabet, and hence
leads to finite redundancy even for infinite alphabet $\X$.

For all examples of weights above, $\CL_w(\A)$ is independent of
order and timing of new symbols, which justifies suppressing the
dependence on $\newx$. This holds more generally for all $w^t_i$ of
the form $w^t_i=u(i)v(m_t)$
\beq\label{eq:wFact}
  \CL_w(\A) ~=~ \sum_{j\in\A}\ln{1\over u(j)} + \sum_{m'=0}^{m-1}\ln{1\over v(m')}
\eeq
For ease of discussion, I will only consider weights of this form,
and indeed mostly the normalized uniform \req{eq:wUN} and
code-length based \req{eq:wCL} ones. Then also $R^\b_S$ only
depends on the counts $n_i$ but not on the symbol order, as
intended.

\paradot{Code length of relative frequencies $n_i/n$}
Oracle $P^{\smash{\v{\hat\t}}}_{iid}$ predicts symbol $j$ with
empirical frequency $n_j/n$, so $j$ can be coded in $\ln (n/n_j)$
nits. I label an estimator {\sc Oracle} if it relies on extra
information, here, knowing the empirical symbol frequencies in
advance. Technically,
$P^{\smash{\v{\hat\t}(x_{1:n})}}_{\smash{iid}}(x_{1:n})$ is an
inadmissible super-probability. To get a feasible (but offline)
predictor one needs to encode the counts $n_i$ in advance.
Arithmetic coding w.r.t.\ $S^\b$ does not work like that but
imagine it did. The $\ln (n/n_j)$ terms would cancel in the
redundancy leaving a code length for all $n_i$. $\CL(\A)$ tells us
which $n_i$ are zero, so only $n_j$ for $j\in\A$ need to be coded,
which can be done in $\ln n$ nits per $j\in\A$, and the upper bound
\req{eq:redSbUB} suggests possibly even in $\fr12\ln(n_j/2\pi)$
nits.

\paradot{Improvement over minimax redundancy}
It is well known that the minimax redundancy of i.i.d.\ sources is
$\fr12\ln n+O(1)$ per base alphabet symbol
\cite{Rissanen:84,Wallace:05}. My model improves upon this in two
significant ways. Consider the asymptotics $n\to\infty$ in
\req{eq:redSbUB}. First, all symbols $k$ that do not appear in
$x_{1:n}$ induce zero redundancy. Second, each symbol $j$ that
appears only finitely often, induces finite bounded redundancy
$\CL(j)+\fr12\ln{n_j\over 2\pi}$ plus $\b$-terms discussed later.
Only symbols appearing with non-vanishing frequency $n_i/n\not\to
0$ have asymptotic redundancy $\fr12\ln n+O(1)$. This improvement
(a) is possible (only) for specific choices of $\b$ such that the
$\b$-terms are small and (b) was possible by refraining from
deriving a uniform minimax redundancy over all sequences, but one
which depends on the symbol counts.

\paradot{$\b$-independent lower redundancy bound}
In Appendix~\ref{app:redSbLBnob} I derive a $\b$-independent lower
bound on the redundancy that cannot be beaten, whatever $\b$ is
chosen. The following lower bound has the same structure as the
upper bounds I derive later, so the terms will be discussed there.

\begin{theorem}[$\b$-independent lower redundancy bound]\label{prop:redSbLBnob}
For any constant $\b>0$, the redundancy of $S^\b$
is lower bounded uniformly in $\b$ by:
\bqa\label{eq:redSbLBnob}
  & & \nq\nq R^\b_S(x_{1:n}) ~\geq~ \CL_w(\A) - m\ln m + \sum_{j\in\A}\fr12\ln{n_j\over 2\pi} - \fr12\ln n - 0.45m - 0.43
\eqa
\end{theorem}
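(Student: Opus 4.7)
The plan is to start from the exact identity~\req{eq:redSb} of Proposition~\ref{prop:redSb} and derive a lower bound uniform in $\b$ by minimizing the $\b$-dependent part $g(\b):=\ln\G(n+\b)-\ln\G(\b)-m\ln\b$ over $\b>0$. The $\b$-independent contribution $\sum_{j\in\A}\ln(n_j^{n_j}/\G(n_j))-n\ln n$ I first rewrite via the upper Stirling bound $\ln\G(n_j)\leq(n_j-\fr12)\ln n_j-n_j+\fr12\ln(2\pi)+\fr{1}{12n_j}$ (Appendix~\ref{app:Gamma}) into $\sum_{j\in\A}\fr12\ln(n_j/(2\pi))+n-n\ln n-E_1$ with $0\leq E_1\leq m/12$; this already produces two of the targeted terms and leaves a $-n\ln n$ to be cancelled against the $\b$-dependent side.

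To control $g$, I apply the two-sided Stirling bounds to both $\ln\G(n+\b)$ and $\ln\G(\b)$ (their $\fr12\ln(2\pi)$ constants cancel in the difference), obtaining the convex pointwise lower bound
\beq
  g(\b)~\geq~\tilde g(\b)~:=~(n+\b-\fr12)\ln(n+\b)-(\b+m-\fr12)\ln\b-n-\fr{1}{12\b}.
\eeq
The stationary condition $\tilde g'(\b)=0$ is the transcendental equation $\ln((n+\b^*)/\b^*)=(m-\fr12)/\b^*+\fr{1/2}{n+\b^*}-\fr{1}{12\b^{*2}}$, with asymptotic root $\b^*\sim(m-\fr12)/\ln(n/m)$ when $n\gg m$. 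Substituting stationarity back into $\tilde g(\b^*)$ eliminates the logarithm: splitting $(n+\b^*-\fr12)\ln(n+\b^*)=(n-m)\ln\b^*+(n+\b^*-\fr12)\ln((n+\b^*)/\b^*)+(\b^*+m-\fr12)\ln\b^*$ and using the stationary right-hand side for the middle term reduces $\min_\b\tilde g(\b)$ to $n\ln n-n-m\ln m-\fr12\ln n+O(1)+O(m\cdot\text{slack})$, with the $-\fr12\ln n$ emerging through the $(n+\b^*-\fr12)\cdot\fr{1/2}{n+\b^*}$ piece after expanding $n\ln(n+\b^*)\approx n\ln n$.

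Summing the two pieces cancels the $\pm n\ln n$ and $\pm n$ contributions, so that together with the $-\ln n^n$ already in $R^\b_S$ one is left with
\beqn
  R^\b_S~\geq~\CL_w(\A)+\sum_{j\in\A}\fr12\ln(n_j/(2\pi))-m\ln m-\fr12\ln n-E,
\eeqn
where $E=E_1+(\text{slack terms})$. The remaining task is numerical bookkeeping to verify $E\leq 0.45m+0.43$: the coefficient $0.45m$ bundles the $m/12\approx 0.083m$ from $E_1$ with per-symbol contributions of order $1/\b^*$ coming from the $\fr{1}{12\b}$ tail of $\tilde g$ and from the $O(1/\b^{*2})$ correction to stationarity, while $0.43$ absorbs the residual $O(1)$ absolute constants.

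The main obstacle is the transcendental nature of the stationary equation: $\b^*$ admits no closed form, so $\tilde g(\b^*)$ cannot be written out directly. I would handle this by substituting an explicit near-optimal proxy—for instance the single Picard iterate $\b=(m-\fr12)/\ln(1+n/(m-\fr12))$—and bounding the loss from optimality via convexity of $\tilde g$ (a tangent-line argument at that $\b$). A secondary subtlety is that the Stirling bound on $\ln\G(\b)$ degrades as $\b\to 0^+$; this is benign because $g(\b)\to+\infty$ as $\b\to 0^+$ whenever $m\geq 2$ (the $(1-m)\ln\b$ divergence visible in the expansion $g(\b)=(n-m)\ln\b+\sum_{i=1}^{n-1}\ln(1+i/\b)$ dominates), so the minimum is attained in a region where Stirling applies, and the degenerate case $m=1$ reduces $g(\b)=\ln(\G(n+\b)/\G(\b+1))$ and is verified by direct computation.
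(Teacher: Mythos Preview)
Your approach has a genuine gap. You lower-bound $g(\b)$ by $\tilde g(\b)$ via Stirling with the $\frac{1}{12\b}$ correction, and then want a uniform lower bound on $\tilde g$. But $\tilde g(\b)\to-\infty$ as $\b\to 0^+$: the $-\frac{1}{12\b}$ term beats $-(\b+m-\fr12)\ln\b$, so $\min_\b\tilde g(\b)=-\infty$ and minimising $\tilde g$ is vacuous. You try to sidestep this by saying ``the minimum of $g$ is attained in a region where Stirling applies'', but that does not help quantitatively: the minimiser of $g$ is $\b^{\min}\sim m/\ln(n/m)$, which for fixed $m$ is $O(1/\ln n)$, so $\frac{1}{12\b^{\min}}\sim\frac{\ln n}{12m}$. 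Plugging in, one finds $\tilde g(\b^{\min})$ falls below the target lower bound by a term of order $-\frac{\ln n}{12m}$ (for $m=2$ it is $-\frac{\ln n}{24}+O(\ln\ln n)$), so $g(\b^{\min})\geq\tilde g(\b^{\min})$ is too weak to recover \req{eq:redSbLBnob}. Your proxy-plus-tangent idea does not fix this either: evaluating $\tilde g$ at a proxy gives an \emph{upper} bound on $\min_\b\tilde g$, and a tangent line from a convex function is unbounded below, so you would still need an a~priori lower bound on $\b^{\min}$ and control of $\frac{1}{12\b}$ there --- neither of which you provide. (Also, the piece $(n+\b^*-\fr12)\cdot\frac{1/2}{n+\b^*}$ is $\approx\fr12$, not $-\fr12\ln n$; the $-\fr12\ln n$ in the target does not arise this way.)

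The paper avoids this entirely by case-splitting at $\b=1$. For $\b\geq 1$ it starts from the Stirling-based lower bound \req{eq:redSbLB} (valid only for $\b\geq 1$) and bounds the $\b$-dependent part $R(\b)=-m\ln\b+n\ln(1+\b/n)+(\b-\fr12)\ln(n/\b+1)$ by elementary inequalities in the sub-cases $1\leq\b\leq n$ and $\b\geq n$, in each case obtaining $R(\b)\geq -m\ln m+m\ln\ln 2$ without solving any transcendental equation. For $\b\leq 1$ it returns to the exact expression \req{eq:redSb} and replaces Stirling on $\G(\b)$ by the crude but adequate $\G(\b)\leq 1/\b$, which gives $-m\ln\b+\ln(1/\G(\b))\geq(m-1)\ln(1/\b)\geq 0$: this is precisely what neutralises the small-$\b$ regime and produces the $-\fr12\ln n$ term (from $\ln\G(n+\b)/n^n$). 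The two regimes are then combined term-by-term. The missing ingredient in your plan is this small-$\b$ device.
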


\section{Redundancy for Approximate Optimal \boldmath$\b^*$}\label{sec:Sbstar}

I am now in a position to approximately minimize the redundancy of
$S^\b$ w.r.t.\ $\b$. Even when only considering asymptotics
$n\to\infty$, I need to distinguish six different regimes for
$\b^*$ depending on how $m$ scales with $n$. I discuss the more
interesting regimes, in particular the unusual situation of
negative redundancy.

\paradot{Optimal constant $\b$}
I now optimize $S^\b$ w.r.t.\ to $\b$. The redundancy $R^\b_S$
is minimized for
\beq\label{eq:dSdb}
  0 ~\stackrel!=~ {\partial R^\b_S\over\partial\b}
  ~=~ -{m\over\b} + \Psi(n\!+\!\b) - \Psi(\b)
\eeq
where $\Psi(x):={\rm d}\ln\G(x)/{\rm d}x$ is the diGamma function.
Neither this equation nor $\partial\Rub^\b_S/\partial\b=0$ have
closed-form solutions, and even asymptotic approximations are a
nuisance. It seems natural to derive expressions for $n\to\infty$
and/or $m\to\infty$, but since $\b$ is inside the diGamma functions
it turns out that considering $\b$-limits leads to fewer cases.
Still one has to separate the regimes $\b\to\infty$, $\b\to
c\lessgtr^{\,\infty}_{\,~0}$, $\b\to 0$, $\b/n\to\infty$, $\b/n\to
c\lessgtr^{\,\infty}_{\,~0}$, and $\b/n\to 0$. I do this in
Appendix~\ref{app:bopt} with further discussion and improvements in
Appendix~\ref{app:bimpr} and stitch together the results, leading
to a surprisingly neat result:

\begin{theorem}[Optimal constant $\b$]\label{thm:optb}
The $\b$ which minimizes $R^\b_S$ \req{eq:redSb} and solves \req{eq:dSdb} is
\beqn
  \b^{min} ~=~ {m\over c_n({m\over n})\ln{n\over m}},\quad
  {\text{where $c_\infty(\nu):=\lim_{n\to\infty}c_n(\nu)$ is smooth and~~~~~~~~}\atop
   \text{monotone increasing from $c_\infty(0)=1$ to $c_\infty(1)=2$.}}
\eeqn
\end{theorem}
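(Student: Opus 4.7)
The plan is to treat the first-order condition \req{eq:dSdb}, $m/\b = \Psi(n+\b) - \Psi(\b)$, as an implicit equation for $\b^{min}$ and resolve it by asymptotic analysis in the rescaled variable $\xi := \b/n$, with $\nu := m/n$ parametrising the answer. First I would apply the large-argument expansion $\Psi(x) = \ln x + O(1/x)$ to both $\Psi(n+\b)$ and $\Psi(\b)$ in the generic regime where $\b$ and $n+\b$ both tend to infinity; this collapses the condition to the parameter-free implicit equation $\nu = \xi\ln(1+1/\xi)$ in the single variable $\xi$.

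Setting $g(\xi) := \xi\ln(1+1/\xi)$, the identity $g'(\xi) = \ln(1+1/\xi) - 1/(\xi+1)$ combined with the elementary bound $\ln(1+x) > x/(1+x)$ gives $g' > 0$, while $g(0^+) = 0$ and $g(\infty) = 1$ are immediate, so $g$ is a smooth strictly increasing bijection $(0,\infty) \to (0,1)$ with smooth inverse $\xi = \xi(\nu)$. Equating $\b = \xi n$ with the claimed form $\b = m/[c\ln(1/\nu)]$ gives $c_\infty(\nu) := \ln(1+1/\xi(\nu))/\ln(1/\nu)$, which inherits smoothness from $\xi(\nu)$ via the implicit function theorem; monotonicity of $c_\infty$ can then be checked by a direct differentiation using $\xi' = 1/g'$. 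The endpoint values come from matched asymptotics on $\nu = \xi\ln(1+1/\xi)$: as $\xi \downarrow 0$ one has $\nu \sim \xi\ln(1/\xi)$, so $\ln(1/\nu) \sim \ln(1/\xi) \sim \ln(1+1/\xi)$ and hence $c_\infty(0^+) = 1$; as $\xi \uparrow \infty$ one has $1-\nu \sim 1/(2\xi)$, so $\ln(1/\nu) \sim 1/(2\xi)$ while $\ln(1+1/\xi) \sim 1/\xi$, hence $c_\infty(1^-) = 2$.

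The main obstacle will be covering the boundary regimes where the generic expansion above breaks down, specifically $\b = O(1)$ (where the small-argument series $\Psi(\b) = -1/\b - \g + O(\b)$ must replace the large-$x$ one, with $\Psi(n+\b) \approx \ln n$) and $\b \gg n$ (where one needs to carry the subleading $-1/(2x)$ term in $\Psi$). Each regime has to be analyzed with its own expansion and shown to yield a minimizer of the same shape $\b = m/[c_n(\nu)\ln(1/\nu)]$, with the regime-dependent discrepancies absorbed into the finite-$n$ correction $c_n - c_\infty$; gluing these pieces into a single smooth $c_n(\nu)$ is the technical heart of the argument, which Appendices~\ref{app:bopt} and~\ref{app:bimpr} handle in detail. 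Existence of $\b^{min}$ in each regime follows from strict monotonicity of the right-hand side of \req{eq:dSdb} in $\b$ (using $\Psi'(n+\b) < \Psi'(\b)$), and the convergence $c_n \to c_\infty$ as $n \to \infty$ reduces to comparing subleading error terms across regimes.
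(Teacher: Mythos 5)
Your proposal is, up to a reciprocal change of variable, the same argument the paper makes in Appendix~\ref{app:bopt}. You work with $\xi:=\b/n$, the paper with $z:=n/\b=1/\xi$; your $g(\xi)=\xi\ln(1+1/\xi)$ is exactly the paper's $g(z)=\frac1z\ln(1+z)$ under $z\leftrightarrow 1/\xi$ (monotone increasing in your variable, decreasing in theirs), and your explicit formula $c_\infty(\nu)=\ln(1+1/\xi(\nu))/\ln(1/\nu)$ matches the paper's $f(z)=\ln(1+z)/[\ln z-\ln\ln(1+z)]$ after the same substitution. The collapse of the digamma condition to the parameter-free equation $\nu=g(\xi)$ via $\Psi(x)\approx\ln x$, and the endpoint asymptotics at $\xi\to0$ and $\xi\to\infty$ giving $c_\infty\to1$ and $c_\infty\to2$, all reproduce the paper's Lemma~\ref{lem:invln1pzdz}.

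Two caveats worth flagging. First, you say monotonicity of $c_\infty$ ``can then be checked by a direct differentiation using $\xi'=1/g'$''; in the paper this is actually the heaviest step, requiring a four-level chain of derivative sign checks ($f'\!<\!0\Leftarrow h<0\Leftarrow h'<0\Leftarrow r<0\Leftarrow r'<0$) rather than a single differentiation, so the remark materially understates the work. Second, you treat the gluing across the $\b=O(1)$, $\b\asymp n$, and $\b\gg n$ regimes as ``the technical heart'' that must be shown to yield a single smooth $c_n$; the paper is deliberately less ambitious here --- it explicitly declines to quantify the accuracy for finite $n$, notes that the $\b\to0$ regime yields $\b\approx(m-1)/\ln(n/m)$ (a slightly different shape with an $m-1$ rather than $m$), and observes that no downstream theorem actually relies on $\b^*$ being the exact minimizer. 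So the ``gluing'' you worry about is not carried out rigorously in the paper either; the theorem should be read with that caveat, and your plan, if pressed to full rigor, would require more than the paper actually delivers.
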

%
For $n\gg m$ we have $c_n(m/n)\approx 1$, which suggests the approximation
\beq\label{eq:bstar}
  \b^* ~:=~ {m\over\ln{n\over m}}
\eeq
This has the same asymptotics as $\b^{min}$ in all regimes of
interest and turns out to lead to excellent experimental results.
In practice, $c_n(m/n)$ is closer to 2, so halving $\b^*$ leads to
slightly better results unless $m$ is extremely small. This is due
to a quite peculiar shape of $c_\infty(\nu)$, plotted and discussed
in more detail in Appendix~\ref{app:bimpr}.
The performance difference between $S^{\b^*}$, $S^{\b^*\!\!/2}$,
and $\b^{min}$ are very small though. I hence use $\b^*$
\req{eq:bstar} for most of the theoretical analysis but recommend
$\b^*\!\!/2$ \req{tab:aichoices} in practice.
Since no formal result in this paper explicitly uses that $\b^*$ is
an approximate solution of \req{eq:dSdb}, we can simply take $\b^*$
on faith value and explore its implications.

\paradot{Discussion of $\b^*$} 
The value of $\b^*$ can be intuitively understood in this way: if
$m$ is much larger than $\ln n$, then we will often be coding new
symbols, and therefore we should reserve more probability mass for
them by making $\b$ large. If however $m$ is much smaller than $\ln
n$, coding a new symbol is a rare occurrence, so we use a small
$\b$ to increase the efficiency of coding already previously seen
symbols.
%
More quantitatively, $\b^*$ (and $\b^{min}$) scale with
$n\to\infty$ for various $m$ as follows (where $0<c<\infty$ and
$0\leq\a<1$)
\beq\label{tab:mblim}
\begin{array}{c|c|c|c|c|c|c
}
  m    &    \to c       & \propto\ln n & \propto n^\a & \propto n & \geq n-c    & =n \\ \hline
  \b^* & \sim c/\ln n & \to       c & \propto n^\a/\ln n   & \propto n & \propto n^2 & \infty \\
\end{array}
\eeq
Besides the mentioned $m\smash{\,^\ll_\gg}\ln n$ divide, note that
if most symbols appear only once, then $\b\propto n^2$ grows very
rapidly. On the other hand $\b^*$ is never very small: $1/\ln n$ is
a lower bound, even if $m=1$. If no symbol appears twice, then
$\b^*=\infty$ is obviously the best choice. Appendix~\ref{app:Exp}
shows that $S^{\b^*}$ works very well in all six regimes.

I also tried ``minor'' modifications but theory breaks down for
some, and experiments for others. The only leeway, apart from
replacing $c_n()$ by a constant in $[1;2]$ I could find is adding
or subtracting small constants from $m$ and/or $n$ in
\req{eq:bstar}. This will later be used to regularize $\b^*$ for
$m=n$.
Note that $\b^*$ depends on the a-priori unknown $n$ and $m$, so
$S^{\b^*}$ is not online. This will be rectified in
Section~\ref{sec:Sbt}. In Appendix~\ref{app:redSbstar} I prove the
following redundancy bound:

\begin{theorem}[Redundancy of $S$ for ``optimal'' constant $\b^*$]\label{prop:redSbstar} 
The redundancy of $S^{\b^*}$ with $\b^*=m/\ln{n\over m}$ is bounded by
\beq\label{eq:redSbstar}
  R^{\b^*}_S(x_{1:n}) ~\leq~
  \CL_w(\A) - (m-\fr12)\ln m + \sum_{j\in\A}\fr12\ln n_j -\fr12\ln n + m\ln\ln{\e n\over m} + 0.56m + 0.082
\eeq
\end{theorem}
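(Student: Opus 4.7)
The plan is to substitute $\b^*=m/L$ (with $L:=\ln(n/m)$) into the upper bound $\Rub^\b_S$ from Proposition~\ref{prop:redSb}, and then bound the resulting expression term by term. The pieces $\CL_w(\A)$, $\sum_j\fr12\ln(n_j/(2\pi))$, and the constant $0.082$ carry through unchanged, so the real work is in controlling the $\b$-dependent summands $-m\ln\b^*+n\ln(1+\b^*/n)+(\b^*-\fr12)\ln(1+n/\b^*)$.

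First, $-m\ln\b^*=-m\ln m+m\ln L$, and $m\ln L\le m\ln(L+1)=m\ln\ln(en/m)$ matches the target. For the Stirling-type summands I use the identity $n\ln(1+\b/n)+(\b-\fr12)\ln(1+n/\b)=(n+\b-\fr12)\ln(n+\b)-n\ln n-(\b-\fr12)\ln\b$, substitute $\b=m/L$, and expand $\ln(n+m/L)=\ln n+\ln(1+m/(nL))$ together with $\ln(m/L)=\ln m-\ln L$. The algebra telescopes: the key cancellation $(m/L-\fr12)(\ln n-\ln m)=(m/L-\fr12)L=m-L/2$ produces $-L/2=-\fr12\ln n+\fr12\ln m$, where the $+\fr12\ln m$ combines with $-m\ln m$ to form the target's $-(m-\fr12)\ln m$, the $-\fr12\ln n$ matches directly, and the leftover $+m$ feeds into the $0.56m$ constant.

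What remains to bound by $0.56m$ is, after collecting, essentially $(n+m/L-\fr12)\ln(1+m/(nL))+(m/L)\ln L-\fr12\ln L+m\ln(L/(L+1))-\fr{m}{2}\ln 2\pi+m$. Applying $\ln(1+u)\le u$ with $u=m/(nL)$ to the first summand gives $m/L+m^2/(nL^2)-m/(2nL)$; adding the $(m/L)\ln L$ piece yields $(m/L)(1+\ln L)\le m$, attained at $L=1$ since $(1+\ln L)/L$ has a unique maximum $1$ there. The slack $(0.56-(1-\fr12\ln 2\pi))m\approx 0.48m$ together with the negative correction $m\ln(L/(L+1))=-m\ln(1+1/L)$ then absorbs the residuals $m^2/(nL^2)$, $-\fr12\ln L$, and the other lower-order terms.

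The main obstacle is precisely the transitional regime $L\approx 1$ (i.e.\ $m\approx n/e$), where $u=\b^*/n$ is $\Theta(1)$ so $\ln(1+u)\le u$ is significantly loose, yet $m\ln(L/(L+1))$ only contributes $-m\ln 2$; here the constant $0.56$ is effectively tight. The cleanest way forward is to split at $L=1$: for $L\ge 1$ the estimates above already suffice, while for $L<1$ (equivalently $m>n/e$) one instead applies $\ln(1+n/\b^*)\le\ln(n/\b^*)+\b^*/n$ to the larger term, which becomes asymptotically exact as $\b^*\to\infty$, together with $n\ln(1+\b^*/n)\le\b^*$ on the small one; the dominant contribution is again at most $m$. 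Verifying the resulting scalar inequality in $L>0$---including a direct check in a neighbourhood of $L=1$ where one exploits $\ln(1+1/e)\approx 0.313$ instead of the cruder $1/e$---is what pins down $0.56$ as the smallest workable constant.
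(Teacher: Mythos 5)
Your proposal substitutes $\b^*=m/L$ into $\Rub^\b_S$ from Proposition~\ref{prop:redSb} and reduces to bounding the per-symbol residual; this is the same strategy as the paper, which reorganizes those residuals into a single-variable function $f(\nb)$ with $\nb=n/m$ and then bounds $f(\nb)-[1+\ln(1+\ln\nb)]$ numerically by $0.48$, giving $1.48-\fr12\ln 2\pi\approx 0.56$. Your bookkeeping is different but the target inequality you derive is the same, and you are right that the bottleneck is $L\approx 1$ (the paper's numerical optimum is $L\approx 1.26$, $\nu\approx 0.284$, nearby).

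There are, however, two concrete issues. First, your intermediate chain does not actually close. Bounding $(n+m/L-\fr12)\ln(1+m/(nL))\le (m/L)+m^2/(nL^2)-m/(2nL)$ and then $(m/L)(1+\ln L)\le m$ leaves $2m-\fr{m}2\ln 2\pi\approx 1.08m$ plus residuals that must be $\le 0.56m$; this requires the remaining terms to contribute $\le -0.52m$, and at $L=1$ (so $m=n/\e$) they only give $m^2/(nL^2)+m\ln(L/(L+1))\approx 0.37m-0.69m=-0.33m$. So the first round of elementary estimates fails outright in the critical regime, not just loosely --- you need the sharper $\ln(1+1/\e)\approx 0.313$ from the start, not merely as a local refinement. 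In effect, the argument collapses back to a pointwise verification of a scalar inequality in $L$, which is exactly what the paper does via $f(\nb)$, so you have not really avoided the ``graphical/numerical'' step; you have only deferred it. Second, your expansion uses $L=\ln(n/m)$ in denominators and is therefore undefined when $m=n$; the paper treats $m=n$ separately (then $\b^*=\infty$, $S(x_{t+1}|x_{1:t})=w^t_{x_{t+1}}$, and the bound is checked directly). That edge case needs an explicit branch in your proof as well.
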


\paradot{Discussion of $R^{\b^*}_S$}
The first and third term have already been discussed.
The second term is the most important one for large $m$.
%
It is about $-\ln\G(m)-m+1$ by \req{eq:GammaULB}.
Therefore for uniform normalized weights \req{eq:wUN} we get
\beq\label{eq:CLDm}
  \CL_w(\A) - (m-\fr12)\ln m ~=~ \ln{D\choose m} - m + \ln m + 1 ~\left\{ {-0.082\atop +0~~~~~~} \right.
\eeq
There are ${D\choose m}$ ways of choosing $m$ symbols out of $D$,
therefore $\ln{D\choose m}$ corresponds to the optimal uniform code
length for the used unordered alphabet. At first, $S^{\b^*}$ seemed
to be more wasteful, coding the $m'$th new symbol in $\ln(D-m'+1)$
nits, hence codes $\A$ including order in $\CL_w(\A)$ nits. But
through the back door by a suitable choice of $\b$, it actually
achieves the theoretically optimal uniform code length
$\ln{D\choose m}$ for the used alphabet, plus other smaller terms.
For large $m$, this can be significantly smaller than $\CL_w(\A)$.

In the extreme case of $m=D$, we have $\ln{D\choose D}=0\ll D\ln
D$. If also $n=m$, we have $\CL_w(\A)=\ln n!$ and $n_i=1\forall i$
and hence
\beqn
  R^{\b^*}_S ~\leq~ \ln n! - n\ln n + 0.56n+0.082 ~\leq~ \fr12\ln n - 0.44n + 1.082
\eeqn
which is negative for $n>4$. This is not a contradiction. It just
says that in this case $S$ codes better than oracle
$P^{\smash{\v{\hat\t}}}_{iid}=({1\over n})^n$. Indeed, if we know
that every symbol appears exactly once, we can code their
permutation in $\ln n!$ rather than $n\ln n$ nits. The $+0.56n$
slack is an artefact of our bound, not of $S^{\b^*}\!\!$, and can
be improved to $0.082n$. The argument generalizes to large $m<n$.

In the other extreme of a constant sequence $x_t=j\forall t$, we
have $m=1$, $P^{\smash{\v{\hat\t}}}_{iid}=1$, $\b^*=1/\ln n$ and
$\CL_{S^{\b^*}}=R^{\b^*}_S\to\CL_w(j)+1$ for $n\to\infty$, i.e.\ 1
nit above theoretical optimum from \req{eq:redSb} and
$R^{\b^*}_S\leq\CL_w(j)+\ln\ln(\e n)+0.65$ from \req{eq:redSbstar},
i.e.\ asymptotically there is only $\ln\ln n$ nits slack in the
bound. This argument generalizes to constant $m>1$.

\section{Redundancy for Variable \boldmath $\vbs$}\label{sec:Sbt}

Since the optimal $\b^*=m/\ln{n\over m}$ depends on $m$ and $n$,
$S^{\b^*}$ cannot be used online, which defeats one of its purposes
and significantly limits its application as discussed in the
introduction. I rectify this problem by allowing a time-dependent
$\b$ in my model, and by adapting $\b^*$ in (nearly) the most
obvious way. I derive a redundancy bound for this variable $\vbs$
which for small $m$ is only slightly worse than the previous one
for constant $\b^*$.

\paradot{Choice of $\vbs$}
A natural way to arrive at an online algorithm is to replace $n$ by
$t$ and $m$ by $m_t$, both known at time $t$ and converging to $n$
and $m$ respectively. This leads to a time-dependent `variable'
$\b_t=m_t/\ln{t\over m_t}$. This works fine except if $m_t=t$, in
which case $\b_t=\infty$ assigns zero probability that the next
symbol is an old one. This is unacceptable, since $m_t=t$ is
typical for small $t$.

If we are at time $t$, we use $\b_t$ to predict $x_{t+1}$ so should
assume that the sequence has (at least) length $t+1$, which
suggests $\b_t=m_{t+1}/\ln{t+1\over m_{t+1}}$. The problem here is
that $m_{t+1}$ depends on the unknown $x_{t+1}$, and technically
$S$ becomes an (unusable) super-probability. Since $m_{t+1}=m_t$ if
$x_{t+1}$ is old anyway, a natural choice is
$\b^*_t=m_t/\ln{t+1\over m_t}$, which still has the same
asymptotics \req{tab:mblim} as $\b^*$, except for $m_t=t$ it is
finite and grows with $t^2$. For $t=0$ I define $S(x_1=i)=w^t_i$ or
equivalently choose any $0<\b^*_0<\infty$. For convenience I
summarize the adaptive model with parameters and definitions in the
box on the next page.

\begin{figure}
\noindent\fbox{
\begin{minipage}{0.97\textwidth}
The $S^\vbs$-probability of $x_{t+1}=i\in\X$ given $x_{1:t}\in\X^t$ is defined as
\beq\label{eq:Svbs}
  S^\vbs(x_{t+1}=i|x_{1:t}) ~:=~ \left\{
  \begin{array}{ccc}
    \displaystyle {n^t_i\over t+\b^*_t} & \qmbox{for} & n^t_i>0 \\[2ex]
    \displaystyle {\b^*_t w^t_i\over t+\b^*_t} & \qmbox{for} & n^t_i=0
  \end{array} \right.
\eeq
\beq\label{eq:bts}
  \b^*_t ~:=~ {m_t\over\ln{t+1\over m_t}},\quad t\geq 1, \qquad 0<\b^*_0<\infty \text{ (any)}, \qquad \vec\b:=(\b_0,\b_1,\b_2,...)
\eeq
\beqn
  \sum_{k\in\X\setminus\A_t} w^t_k \leq 1, ~~\qmbox{e.g.}~~ w^t_i ~=~ {1\over D-m_t} ~~\qmbox{or}~~ w^t_i ~=~ \e^{-\CL(i)}
\eeqn
\beqn
  m_t=|\A_t|, \qquad \A_t=\{x_1,...,x_t\}, \qquad n^t_i=|\{\tau\in\{1,...,t\}:x_\tau=i\}| \\[1ex]
\eeqn\end{minipage}
}\vspace{-2ex}
\end{figure}
Note that compact representation \req{Sjoint} does not hold
anymore: The resulting process $S^\vbs(x_{1:n})$ is no longer
exchangeable, but close enough in the sense that a comparable upper
bound as for $\b^*$ holds. The constants are somewhat worse, but
mostly due to the crude proof (see Appendix~\ref{app:redSbt}).

\begin{theorem}[Redundancy of $S$ for ``optimal'' variable $\vbs$]\label{prop:redSbt} 
The redundancy of $S^\vbs$ with $\b^*_t=m_t/\ln{t+1\over m_t}$ is bounded by
\beq\label{eq:redSbt}
  R^\vbs_S(x_{1:n})
   ~\leq~ \CL_w(\A) - (m\!-\!1)\ln m + \sum_{j\in\A}\fr12\ln n_j - \fr12\ln n
   + \fr32 m\ln\ln\fr{2n}m + 2.33m + 0.86
\eeq
\end{theorem}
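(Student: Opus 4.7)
The plan is to mimic the proof of Theorem~\ref{prop:redSbstar} pathwise, since the exchangeable closed form \req{Sjoint} no longer applies once $\b$ varies with $t$. By the chain rule and the telescoping identity $\prod_{t\in\oldx:\,x_{t+1}=j}n^t_j=\G(n_j)$ from the first line of \req{Sjoint0}, one obtains the pathwise identity
$-\ln S^\vbs(x_{1:n})=\CL_w(\A)+\sum_{t=0}^{n-1}\ln(t+\b^*_t)-\sum_{j\in\A}\ln\G(n_j)-\sum_{t\in\newx}\ln\b^*_t.$
Subtracting $-\ln P^{\v{\hat\t}}_{iid}=n\ln n-\sum_j n_j\ln n_j$ and inserting the Stirling bound \req{eq:GammaULB} on each $\ln\G(n_j)$ converts the $n_j\ln n_j-\ln\G(n_j)$ pairs into $\fr12\ln(n_j/(2\pi))+n_j$, immediately producing the target $\sum_j\fr12\ln n_j$ and part of the linear-in-$m$ constant.

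What remains is the key residual $\Sigma:=\sum_{t=0}^{n-1}\ln(t+\b^*_t)-\sum_{t\in\newx}\ln\b^*_t-n\ln n$. The $t=0$ contributions cancel ($0\in\newx$ gives $\ln\b^*_0-\ln\b^*_0$), so I would split $\ln(t+\b^*_t)=\ln t+\ln(1+\b^*_t/t)$ for $t\geq1$. The $\sum_{t=1}^{n-1}\ln t=\ln\G(n)$ part furnishes, by Stirling, the targeted $-\fr12\ln n-n+\fr12\ln(2\pi)$ from $\ln\G(n)-n\ln n$. For the remaining correction, pairing each $\ln(1+\b^*_{t_j}/t_j)-\ln\b^*_{t_j}=\ln(1/\b^*_{t_j}+1/t_j)$ at the $j$th new-time arrival $t_j$ (where $m_{t_j}=j-1$ and $1/\b^*_{t_j}=\ln\frac{t_j+1}{j-1}/(j-1)$) gives, after a case split on $\ln\frac{t_j+1}{j-1}\gtrless 1$, at most $-\ln(j-1)+(\ln\ln\frac{t_j+1}{j-1})_+ +O(1)$; summed over $j=2,\ldots,m$, this yields $-\ln\G(m)$, hence $-(m-1)\ln m$ via Stirling, plus a $\ln\ln$ tail. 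The old-time contribution $\sum_{t\in\oldx}\ln(1+\b^*_t/t)$ is controlled via monotonicity of $x\mapsto x/\ln\frac{t+1}{x}$ on $(0,t+1)$: except possibly at the single first-repeat time, $m_t<t$ at old times, so for $t\gtrsim em$ one gets $\b^*_t/t\leq m/[t\ln\frac{t+1}{m}]$, and the resulting Abel/integral comparison sums to only $O(\ln\ln(n/m))$.

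The main obstacle is bounding $\sum_{j=2}^m(\ln\ln\frac{t_j+1}{j-1})_+$ uniformly over all admissible arrival-time placements $j-1\leq t_j\leq n-m+j-1$. The symmetric choice $t_j\approx(j-1)n/m$ yields $m\ln\ln(n/m)$ by Jensen, matching Theorem~\ref{prop:redSbstar}. However, because $\b^*_t$ can inflate as $t^2$ whenever $m_t=t$ (all symbols so far distinct, which is typical at small $t$), the extremal configurations cannot be absorbed by a single concavity step, and one must accept both the inflated $\fr32$ prefactor and the enlargement of $en/m$ to $2n/m$ inside the $\ln\ln$ to get a bound that holds for every placement. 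This is the ``crude'' slack acknowledged by the paper. Assembling the old-time, new-time, and Stirling constants then yields the stated $\fr32 m\ln\ln\frac{2n}{m}+2.33m+0.86$ and completes \req{eq:redSbt}.
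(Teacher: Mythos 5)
Your decomposition of $R^\vbs_S$ into $\CL_w(\A)$, $\sum_t\ln(t+\b^*_t)$, $-\sum_{t\in\newx}\ln\b^*_t$, the $\G(n_j)$ terms and $-n\ln n$ is the same starting point as the paper's \req{eq:RbvItoV}, and the Stirling substitutions you describe are fine. The problem is what you do with the residual $\Sigma$. You explicitly identify the ``main obstacle'' --- bounding $\sum_{j}(\ln\ln\frac{t_j+1}{j-1})_+$ \emph{uniformly} over arbitrary arrival-time placements $t_j$, together with the old-time sum $\sum_{t\in\oldx}\ln(1+\b^*_t/t)$ whose summands can be $\Theta(1)$ whenever $m_t\approx t$ --- and then you simply assert that ``one must accept'' the inflated $\frac32$ prefactor and $2n/m$ inside the $\ln\ln$. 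That is not a proof step; it is the statement of the difficulty.

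The paper resolves exactly this obstacle with a structural lemma that your proposal lacks: if $t\in\oldx$ and $t+1\in\newx$, swapping $x_t$ and $x_{t+1}$ strictly increases $R^\vbs_S$ (this uses the specific form of $\b^*_t=m_t/\ln\frac{t+1}{m_t}$ and a monotonicity argument for a concrete two-variable function $f_{\nb}(x)$). Iterating the swaps reduces the worst case to the single canonical ordering ``all $m$ new symbols first, then all repeats,'' for which $m_t=\min\{t,m\}$ and $\newx=\{0,\dots,m-1\}$. Only \emph{then} is the sum split into the ranges $t<m$, $m\leq t<2m$, $t\geq 2m$ and bounded by elementary inequalities and an integral comparison --- precisely the pieces you gesture at, but now with $t_j=j-1$ and $m_t$ pinned down, so the sums close. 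Without the swap lemma, your per-$j$ bound $-\ln(j-1)+(\ln\ln\frac{t_j+1}{j-1})_+ +O(1)$ cannot be summed to the claimed form: for late arrival placements the $\ln\ln$ contribution is $\ln\ln n$ per new symbol, and nothing in your argument shows this is compensated by a corresponding decrease in the old-time sum. In short, the proposal's route is missing the reduction-to-worst-case argument that makes the pathwise bound tractable; as written, the proof does not go through.
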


The bounds \req{eq:redSb}, \req{eq:redSbUB}, \req{eq:redSbstar},
and \req{eq:redSbt}, except for the first term, are independent of
the base alphabet size $D$. For $w^t_i=2^{-\CL(i)}$, the bounds are
completely independent of $D$. They therefore also hold for
countably infinite alphabet. Analogous to the Dirichlet-multinomial
generalizing to the Chinese restaurant process, $S$ can also be
generalized to continuous spaces $\X$. The weights $w^t_i$ become
(sub)probability densities ($\int_{\X\setminus A}w^t_i\d i\leq 1$).
The bounds remain valid, we only lose the code length
interpretation of $\CL_w(\A)$.

\paradot{Proof idea}
Unlike in \req{eq:redSb} for constant $\b$, $R^\vbs_S$ depends on
the order of symbols and cannot be expressed in terms of Gamma
functions bound by \req{eq:GammaULB}.
Furthermore,  $\b^*_t$ is generally not monotone in $t$, nor does
it factor into monotone increasing and/or decreasing functions,
which makes the analysis cumbersome but not impossible due to a
different special property of $\b^*_t$.
I show that by swapping two consecutive symbols, $x_t$ being
$\oldx$ and $x_{t+1}$ being $\newx$, the redundancy always
increases. It is therefore sufficient to upper bound $R^\vbs_S$ for
sequences in which all new symbols come first before they repeat.
For such a sequence, by separating $t\leq m$ for which $m_t=t$ and
$t\geq m$ for which $m_t=m$, it is then possible to upper bound the
handfull of resulting sums.

\section{Comparison to Other Methods}\label{sec:Compare}

In this section I theoretically (and in Section~\ref{app:Exp}
experimentally) compare our models to various other more or less
related ones, namely, the Dirichlet-multinomial with KT and Perks
prior, and Bayesian sub-alphabet weighting. An experimental
comparison can be found in Appendix~\ref{app:Exp}.

\paradot{Dirichlet-multinomial distribution}
The Dirichlet distribution
\beqn
  \text{Dir}^{\v\a}(\v\t) ~:=~ {\G(\a_+)\over\prod_i\G(\a_i)}\prod_{i=1}^D\t_i^{\a_i-1}
\eeqn
with parameters $\a_i>0$ and $\a_+:=\a_1+...+\a_D$
used as a Bayesian prior for $P^{\v\t}_{iid}$ leads to joint and predictive
Dirichlet-multinomial distribution
\bqan
  \DirM^{\v\a}(x_{1:n}) &:=& \int P^{\v\t}_{iid}(x_{1:n})\text{Dir}^{\v\a}(\v\t)\d\v\t
  = {\G(\a_+)\prod_i\G(n_i\!+\a_i)\over\G(n\!+\!\a_+)\prod_i\G(\a_i)}, \nonumber \\
  \DirM^{\v\a}(x_{t+1}=i|x_{1:t}) &=& {n^t_i+\a_i\over t+\a_+} ~~~~~~~~~~\qmbox{with redundancy} \nonumber
\eqan\vspace{-3ex}
\bqa\label{eq:redDiri}
  R^{\v\a}_\DirM(x_{1:n}) &=& \sum_{i=1}^D \ln{n_i^{n_i}\G(\a_i)\over\G(n_i\!+\a_i)} - \ln{n^n\G(\a_+)\over\G(n\!+\!\a_+)}
  \\ \label{eq:redDiriLimit}
  & \stackrel{n_i\to\infty}\longrightarrow &
  {D\!-\!1\over 2}\ln{n\over 2\pi} + \sum_i(\fr12\!-\!\a_i)\ln{n_i\over n} + \sum_i\ln\G(\a_i)-\ln\G(\a_+)
\eqa
If we choose constant weights $w^t_i=\a_i/\a_+$ and $\b=\a_+$  in $S$,
we see that $\DirM(x_{t+1}=i|x_{1:t})$ is the sum of both cases in \req{eq:Sdef},
hence $\DirM(x_{t+1}=i|x_{1:t})\geq S(x_{t+1}=i|x_{1:t})$.
Therefore, the upper redundancy bound in Proposition \ref{prop:redSb} also holds for
\DirM: $R^{\v\a}_\DirM \leq R^{\a_+}_S[w^t_i:=\a_i/\a_+]\leq$ Eq.\req{eq:redSbUB}.
The analysis in Section \ref{sec:Sbstar} suggests to set
the Dirichlet parameters to $\a^*_i:=w^0_i\b^*$ for which
$R^{\v\a^*}_\DirM \leq R^{\b^*}_S[w^t_i:=\a_i/\a_+] \leq$ Eq.\req{eq:redSbstar}.
If we allow for time-dependent $\a_i$, Section \ref{sec:Sbt} suggests
to set $\a_i=\a^{t*}_i:=w^t_i\b^*_t$
for which $R^{\vec{\v\a}^*}_\DirM \leq R^\vbs_S\leq$ Eq.\req{eq:redSbt},
but note that weights $w^t_i$ must normalize
over $\X$ rather than $\A_t$ for \DirM\ to form a (sub)probability.
This can harm performance but only for large $m$.
Note that for continuous $\X$ and weight {\em density} $w()$,
$S$ and $\DirM$ coincide.

The overall suggestion {\em if} using the (adaptive)
Dirichlet-multinomial for prediction or compression or estimation
is to choose variable parameters
\beq\label{eq:astar}
  \a_i ~=~ \a^{t*}_i ~:=~ {m_t\over D\ln{t+1\over m_t}} \qmbox{or}
  {2^{-\CL(i)}m_t\over \ln{t+1\over m_t}}
\eeq

\paradot{The KT estimator}
As can be seen from \req{eq:redDiriLimit}, for $\a_i=\fr12$ the
\DirM\ redundancy \req{eq:redDiri} is asymptotically independent of
the counts $(n_i)$, and indeed it is well-known that asymptotically
this is essentially also the best choice for the worst counts
\cite{Krichevsky:81,Krichevskiy:98,Wallace:05}. This so-called
KT-estimator has minimax redundancy \cite{Begleiter:06}
\beq\label{eq:KTUB}
  R^{\v 1/2}_\DirM ~\leq~ {D-1\over 2}\ln n + \ln D
\eeq
Asymptotically, this bound is essentially tight. %
We can compare this to our bound \req{eq:redSbstar}.
For $m\ll n$, the dominant term in \req{eq:redSbstar} is $\sum_j\fr12\ln n_j$.
This can be bounded by Jensen's inequality as
\beqn
  \sum_{j\in\A}\fr12\ln n_j - \fr12\ln n
  ~\leq~ {m-1\over 2}\ln {n\over m}
  ~\leq~ {m-1\over 2}\ln n
  ~\leq~ {D-1\over 2}\ln n
\eeqn
so is clearly much smaller than \req{eq:KTUB} due to symbols that
do not appear (gap in the third inequality) and symbols that appear
rarely (gap in the first+second inequality). The latter happens
often in particular for large $m$, but then the other terms in
\req{eq:redSbstar} gain relevance.

\paradot{Sparse KT estimators}
If we knew the used alphabet $\A$ in advance, we could employ the
\KT\ estimator on this sub-alphabet without reference to the base
alphabet $\X$ and achieve much smaller redundancy $\leq {m-1\over
2}\ln n+\ln m$.
%
In absence of such an oracle, we could code unordered $\A$ in
advance in $\ln{D\choose m}$ nits, which gives an off-line
estimator with $\leq D\ln{D\over m}$ extra redundancy above the
oracle.
%
We can even get online versions: A light-weight way is at time $t$
to use \KT\ on $\A_t$ but reserve an escape probability of
$\fr1{t+1}$ for and uniformly distribute it among the unseen
symbols $\X\setminus\A_t$, which leads to a similar but larger
extra redundancy of $\ln n+m\ln D+m+\ln 2$ \cite{Hutter:12ssdc}.
%
A heavy-weight Bayesian solution is to take a weighted average over
the $\KT_{\A'}$ estimators for all $\A'\subseteq\X$
\cite{Tjalkens:93}. As prior one could take a uniform distribution
over the size $m'$ of $\A'$, and then for each $m'$ a uniform
distribution over all $\A'$ of size $m'$ with extra redundancy
$\leq D\ln{D\over m}+\ln D$. The resulting exponential mixture can
be computed in linear time in $D$ as discussed in
Appendix~\ref{app:SAW}. This is still a factor of $D$ slower than
all other estimators considered in this paper. Otherwise the
linear-time update rule has a similar structure to \req{eq:Svbs},
and hence $S^\vbs$ may be derivable as an approximation to Bayesian
sub-alphabet weighting.

\section{Conclusion}\label{sec:Disc}

I introduced and analyzed a model, closely related to the
Dirichlet-multinomial distribution, which predicts an $\oldx$
symbol with its past frequency scaled down by $t\over t+\b$ and a
new symbol with its weight, scaled down by $\b\over t+\b$. Natural
weight choices are uniform and $2^{-\text{CodeLength}}$.

I derived exact expressions and for small $m$ rather tight bounds
for the code length and redundancy. The bounds were data-dependent
rather then expected or worst-case bounds. This led to an
(approximately) optimal choice of $\b$ different from traditional
recommendations. The constant offline $\b^*$ \req{eq:bstar} depends
on the total sequence length $n$ and number of different used
symbols $m$. The variable online $\vbs$ \req{eq:bts} depends on the
current sequence length $t$ and number of different symbols
observed so far $m_t$.

The redundancy bounds additionally depend on the individual symbol
counts $n_i$ themselves. They show that $S^{\b^*}$ has (at most)
zero redundancy for unused symbols and finite redundancy for
symbols occurring only finitely often, unlike the KT estimator and
companions which have redundancy $\fr12\ln n+O(1)$ per base symbol,
whether it occurs or not. Indeed, my bounds are independent of the
base alphabet size $D$, therefore also hold for denumerable and
with suitable reinterpretation for continuous $\X$.

There seems to be not much leeway in choosing a globally good $\b$.
Experimentally it seems that even slight changes in $\b^*$ can
significantly deteriorate performance in some $(m,n,D)$-regime, but
can only marginally and locally improve performance in others.
Empirically $S^\vbs$ seems superior to the other fast online
estimators I compared it to. See Appendix~\ref{app:Exp} for some
results.

As a simple, online, fast, i.i.d.\ estimator, $S^\vbs$ should be a
useful alternative sub-component in more sophisticated (online)
estimators/predictors/compressors/modellers such as large-alphabet
CTW \cite{Tjalkens:93} and others
\cite{Hutter:12ctswitch,Hutter:12adapctw,Mahoney:12}. The derived
redundancy bounds are of theoretical interest, not only for
optimizing model parameters.

\paradot{Acknowledgements}
I thank the anonymous reviewers for valuable feedback, and in
particular one reviewer for providing the efficient representation
of the Bayesian sub-alphabet estimator in Appendix~\ref{app:SAW}.


\section*{References}\label{sec:Bib}
\addcontentsline{toc}{section}{\refname}
\def\refname{\vspace{-4ex}}

\begin{small}

\end{small}

\appendix 
\section{Approximations of the (Di)Gamma Function}\label{app:Gamma}

\beq\label{eq:GammaULB}
  (x-\fr12)\ln x-x+\ln\sqrt{2\pi}
  \mathop{\leq}\limits_{\scriptstyle\uparrow\atop\scriptstyle\forall x>0}
  \ln\G(x)
  \mathop{\leq}\limits_{\scriptstyle\uparrow\atop\scriptstyle\forall x\geq 1}
  (x-\fr12)\ln x-x+1
\eeq
The lower bound is asymptotically sharp for $x\to\infty$ but a
factor of 2 too small for $x\to 0$. The absolute error of upper and
lower bound for all $x\geq 1$ is at most
$1-\ln\sqrt{2\pi}\,\dot=\,0.081$. Some other used identities,
asymptotics, and bounds are:
\beq\label{eq:lntsum}
  \sum_{t=1}^{n-1}\ln t ~=~ \ln\G(n)
\eeq
\beq\label{eq:lnbnd}
  1-1/x ~\leq~ \ln x ~\leq~ x-1 ~~~~~~ [= ~\text{iff}~ x=1]
\eeq
\beq\label{eq:Psi}
  \Psi(z) ~=~ {\d\ln\G(z)\over\d z} ~\sim~ \ln z - O\Big({1\over z}\Big)
\eeq
\beq\label{eq:Gsxb}
  \G(z) ~\leq~ {1\over z} \qmbox{for} z ~\leq~ 1
\eeq

\section{Proof of Theorem~\ref{prop:redSbLBnob}}\label{app:redSbLBnob}

I start with the lower bound \req{eq:redSbLB} rewritten as
\beq\label{eq:redSbLBApp}
  \Rlb^\b_S ~=~ \CL_w(\A) - m\ln\b + \sum_{j\in\A}\fr12\ln n_j + n\ln(1\!+\!{\b\over n}) + (\b\!-\!\fr12)\ln({n\over\b}\!+\!1) - m - [1-\ln\sqrt{2\pi}]
\eeq
which is valid for $\b\geq 1$. Let
\beqn
  R(\b) :=- m\ln\b + n\ln(1\!+\!{\b\over n}) + (\b\!-\!\fr12)\ln({n\over\b}\!+\!1)
\eeqn
be the $\b$-dependent terms in \req{eq:redSbLBApp}.

For $\underline{1\leq\b\leq n}$,
\beqn
  R(\b) ~\geq~ -m\ln\b + (\b\!-\!\fr12)\ln 2 ~\geq~ - m\ln m + m\ln\ln 2
\eeqn
The last inequality follows from minimizing the first w.r.t.\ $\b$
by differentiation and inserting the minimizer $\b=m/\ln 2$ and
dropping the second term.

For $\underline{\b\geq n}$ and with abbreviations $z:=n/\b\leq 1$
and $\nb={n\over m}\geq 1$ we get
\bqan
  R(\b) &\geq& -m\ln\b + n\ln{\b\over n} + {\b\over 2}\ln\Big({n\over\b}+1\Big) \\
  &=& (n-m)\ln\b -n\ln n + {n\over 2}{\ln(1+z)\over z}
  ~~~~~~~~~~~~~~~~~~~~ \left[{\text{increasing in $\b$}\atop\text{decreasing in $z$}}\right] \\
  &\geq& (n-m)\ln n - n\ln n +{n\over 2}\ln 2 \\
  &=& -m\ln m + m[\nb\fr12\ln 2-\ln\nb]
  ~~~~~~~~~~~~~~~~~~~~ [\text{minimized for $\nb=2/\ln 2$}] \\
  &\geq& -m\ln m + m[1-\ln 2+\ln\ln 2] \\
  &\geq& -m\ln m + m\ln\ln 2
\eqan
which is the same as for $1\leq\b\leq n$.
Plugging this into \req{eq:redSbLBApp} we get for $\b\geq 1$
\beq\label{eq:Rlbbg1}
  R^\b_S(x_{1:n}) ~\geq~ \CL_w(\A) - m\ln m + \sum_{j\in\A}\fr12\ln n_j
  - m[1-\ln\ln 2] - [1 - \ln\sqrt{2\pi}]
\eeq

For $\underline{\b\leq 1}$ we need to start with the exact
expression \req{eq:redSb}:
\bqan
  \sum_{j\in\A} \ln{n_j^{n_j}\over\G(n_j)}
  &\stackrel{\req{eq:GammaULB}}\geq& \sum_{j\in\A} [\fr12\ln n_j+n_j-1]
  ~=~ \sum_{j\in\A} [\fr12\ln n_j] + n - m
\\
  -m\ln\b + \ln{1\over\G(\b)} &\stackrel{\req{eq:Gsxb}}\geq& (m\!-\!1)\ln{1\over\b} ~\geq~ 0
\\
  \ln{\G(n+\b)\over n^n} &\stackrel{\req{eq:GammaULB}}\geq&
  (n+\b-\fr12)\ln(n+\b)- (n+\b) + \ln\sqrt{2\pi}-n\ln n \\
  &=& n\ln(1+{\b\over n}) + (\b-\fr12)\ln(n+\b)-(n+\b) + \ln\sqrt{2\pi} \\
  &\geq& -\fr12\ln(2n) - n - 1 +\ln\sqrt{2\pi}
\eqan
Putting everything together we get for $\b\leq 1$
\beq\label{eq:Rlbbl1}
  R^\b_S(x_{1:n}) ~\geq~ \CL_w(\A) + \sum_{j\in\A}\fr12\ln n_j -\fr12\ln n - m - [1-\ln\sqrt{2\pi}+\fr12\ln 2]
\eeq
Pairing up terms (sometimes zero) in \req{eq:Rlbbg1} and
\req{eq:Rlbbl1} and always taking the smaller one, we get after
some rewrite \req{eq:redSbLBnob}, valid for all $\b$.

\section{Derivation of Approximate Optimal \boldmath$\b^*$}\label{app:bopt}

\paradot{Exact implicit expression}
The redundancy of $S$ is minimized for
\beq\label{eq:dSdb2}
  0 ~\stackrel!=~ {\partial R^\b_S\over\partial\b}
  ~=~ -{m\over\b} + \Psi(n\!+\!\b) - \Psi(\b)
\eeq
where $\Psi(x):={\rm d}\ln\G(x)/{\rm d}x$ is the diGamma function.
Our goal is to approximately solve this equation w.r.t.\ $\b$.
Since no formal result in this paper explicitly uses that $\b^*$ is
an approximate solution of \req{eq:dSdb2}, I only motivate the form
of $\b^*$ by asymptotic considerations without discussing the
accuracy of the approximation for finite $n$.
With the following change in variables
\beqn
  0 < z:={n\over\b} < \infty ~~\qmbox{and}~~ 0<\nu:={m\over n} < 1
\eeqn
\req{eq:dSdb2} can be written as
\beqn
  \nu = {1\over z}\Big[\Psi\big(n(1\!+\!{1\over z})\big) - \Psi\big({n\over z}\big)\Big]
\eeqn
We need to solve this w.r.t.\ $z$ for large $n$.

\paradot{$\b\to c>0$ and $\b\to\infty$}
\beqn
  \underline{\b\to\infty} ~~~\Longrightarrow~~~ {n\over z}\to\infty
  ~~~\stackrel{2\times\req{eq:Psi}}\Longrightarrow~~~
  \nu \to {1\over z}\Big[\ln\big(n(1\!+\!{1\over z})\big) - \ln\big({n\over z}\big)\Big]
  ~=~ {1\over z}\ln(1+z)
\eeqn
which is actually good for any $n$ as long as $z=o(n)$. Next consider
\beqn
  \underline{\b\to c}>0 ~~~\Longrightarrow~~~ {n\over z}\to c
  ~~~\stackrel{\req{eq:Psi}}\Longrightarrow~~~
  \nu \to {1\over z}\Big[ \underbrace{\ln(1\!+\!z)}_{\sim\ln n\to\infty}
  + \underbrace{\ln\big({n\over z}\big)-\Psi\big({n\over z}\big)}_{\to\ln(c)-\Psi(c)=const.} \Big]
  ~\sim~ {1\over z}\ln(1+z) \vspace{-3ex}
\eeqn
Therefore we need to solve
\beqn
  \nu ~=~ g(z) := {1\over z}\ln(1+z) \qmbox{for} z=O(n), \qquad 0<z<\infty, \qquad 0<\nu<1
\eeqn
i.e.\ invert function $g$.

\begin{lemma}[Inverse of \boldmath$\ln(1+z)/z$]\label{lem:invln1pzdz}
The function $g(z):={1\over z}\ln(1+z)$ with domain $0<z<\infty$ is
strictly monotone decreasing and has inverse
$g^{-1}(\nu)={c(\nu)\over\nu}\ln{1\over\nu}$ with domain $0<\nu<1$,
where $c(\nu)$ is smooth and strictly monotone increasing from
$c(0^+)=1$ to $c(1^-)=2$.
\end{lemma}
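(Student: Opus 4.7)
The plan is to separate the claim into four pieces: $(a)$ $g$ is a smooth strictly decreasing bijection $(0,\infty)\to(0,1)$, $(b)$ smoothness of the defined $c$, $(c)$ the endpoint limits $c(0^+)=1$ and $c(1^-)=2$, $(d)$ strict monotonicity of $c$. Parts $(a)$--$(c)$ are routine; part $(d)$ I would handle cleanly via two applications of the monotone form of l'Hopital's rule.

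For $(a)$, differentiating yields $g'(z)=[z/(1+z)-\ln(1+z)]/z^2$, whose numerator vanishes at $0$ and has derivative $-z/(1+z)^2<0$, so $g'<0$ on $(0,\infty)$. Together with $\lim_{z\to 0^+}g(z)=1$ and $\lim_{z\to\infty}g(z)=0$ this gives the smooth bijection. The inverse function theorem then gives smoothness of $g^{-1}$, and defining $c(\nu):=\nu\,g^{-1}(\nu)/\ln(1/\nu)$ makes the stated identity $g^{-1}(\nu)=(c(\nu)/\nu)\ln(1/\nu)$ tautological and $c$ smooth on $(0,1)$, proving $(b)$. For $(c)$, parametrize by $z=g^{-1}(\nu)$: as $\nu\to 1^-$ one has $z\to 0^+$ with $\nu=1-z/2+O(z^2)$ and $\ln(1/\nu)=z/2+O(z^2)$, so $c=\nu z/\ln(1/\nu)\to 2$; as $\nu\to 0^+$ one has $z\to\infty$ with $\nu\sim\ln(z)/z$ and $\ln(1/\nu)\sim\ln z$, so $c\to 1$.

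The main obstacle is $(d)$. Using $\nu z=\ln(1+z)$, rewrite $c$ as $\tilde c(z):=\ln(1+z)/D(z)$ where $D(z):=\ln z-\ln\ln(1+z)$. Since $g$ is strictly decreasing, $c$ is strictly increasing in $\nu$ iff $\tilde c$ is strictly decreasing in $z$. Both the numerator $\ln(1+z)$ and $D(z)$ are smooth, vanish at $z=0$, and have positive derivative on $(0,\infty)$ (positivity of $D'$ follows from $(1+z)\ln(1+z)>z$, which holds because $k(z):=(1+z)\ln(1+z)-z$ satisfies $k(0)=0$ and $k'(z)=\ln(1+z)>0$). The monotone form of l'Hopital's rule then reduces monotonicity of $\tilde c$ to monotonicity of $r(z):=[d\ln(1+z)/dz]/D'(z)=z\ln(1+z)/[(1+z)\ln(1+z)-z]$.

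Finally I would dispatch $r$ by changing variables to $u:=\ln(1+z)$, giving $r=N(u)/M(u)$ with $N(u)=u(e^u-1)$ and $M(u)=(u-1)e^u+1$; both vanish at $u=0$ and $M'(u)=ue^u>0$, so a second application of the monotone l'Hopital rule reduces monotonicity of $r$ to that of $r_1(u):=N'(u)/M'(u)=1+(1-e^{-u})/u$. The derivative of $(1-e^{-u})/u$ has the sign of $(u+1)e^{-u}-1$, which vanishes at $u=0$ and has derivative $-ue^{-u}<0$, hence is strictly negative on $(0,\infty)$. Thus $r_1$ is strictly decreasing, so $r$ is, so $\tilde c$ is, completing the proof that $c$ is strictly monotone increasing from $1$ to $2$.
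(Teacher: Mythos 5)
Your proof is correct, and it takes a genuinely different route from the paper for the key part $(d)$. The paper also first shows $g'<0$ and reduces monotonicity of $c$ to showing that $f(z):=c(g(z))=\ln(1+z)/[\ln z-\ln\ln(1+z)]$ is decreasing, but it then attacks $f'<0$ by brute-force differentiation: it writes $f'=h(z)/(\text{positive denominator})$, notes $h(0^+)=0$, reduces $h'<0$ to $r(z):=\ln(1+z)-z/\sqrt{1+z}<0$, and finally closes with $r(0^+)=0$ and $r'<0$ via the elementary inequality $1+z<(1+\tfrac12z)^2$. You instead invoke the monotone form of l'Hopital's rule twice, which packages the ``both vanish at $0$, compare derivative quotients'' reduction cleanly and avoids differentiating a quotient whose denominator is itself a quotient; your substitution $u=\ln(1+z)$ further collapses the second step into the tidy ratio $N(u)/M(u)=u(\e^u-1)/[(u-1)\e^u+1]$, ending with the simple inequality $(u+1)\e^{-u}<1$. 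The paper's approach is more bare-hands and yields a different terminal inequality; yours is more systematic and the intermediate expressions are simpler, at the cost of assuming familiarity with the monotone l'Hopital rule. Both are complete. Minor remark: when you say $D(z)=\ln z-\ln\ln(1+z)$ ``vanishes at $z=0$'', you mean $D(0^+)=0$ as a one-sided limit (each summand individually diverges); that is the correct hypothesis for the monotone l'Hopital rule on $(0,\infty)$, but worth phrasing carefully.
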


\begin{proof}
Strict monotonicity of $g$ and therefore existence of an inverse
follows from
\beqn
  g'(z) ~=~ {1\over z^2}\Big[{z\over 1+z}-\ln(1+z)\Big] ~\stackrel{\req{eq:lnbnd}}<~ 0
\eeqn
I first study the asymptotics of $\nu=g(z)$ for $z\to 0$ and
$z\to\infty$.
\bqan
  \underline{z\to 0} ~~&\Longrightarrow&~~ \nu\to 1, \qmbox{more precisely}
  \nu = 1-\fr12 z+O(z^2) ~~~\Longrightarrow~~~ z\approx 2(1-\nu)
\\
  \underline{z\to\infty} ~~&\Longrightarrow&~~ \nu\to 0,
\qmbox{and asymptotically}
  z \approx {1\over\nu}\ln{1\over\nu}
\eqan
I got the last expression by fixed point iteration: Rewrite $\nu=g(z)$ as $z={1\over\nu}\ln(1+z)$
and now iterate $z_{t+1}={1\over\nu}\ln(1+z_t)$ starting from any $0<z_0:=c<\infty$.
This gives $z_1={1\over\nu}\ln(1+c)$ and
\beqn
  z_2 ~=~ {1\over\nu}\ln\Big[1+\!\!\!\underbrace{1\over\nu}_{\to\infty}\!\!\!\ln(1\!+\!c)\Big]
  ~\sim~ {1\over\nu}\ln\Big[{1\over\nu}\ln(1\!+\!c)\Big]
  ~=~ {1\over\nu}\Big[\underbrace{\ln{1\over\nu}}_{\to\infty}+\underbrace{\ln\ln(1\!+\!c)}_{finite}\Big]
  ~\sim~ {1\over\nu}\ln{1\over\nu}
\eeqn
No more iterations are needed!
If we tentatively apply the $\nu\to 0$ expression for $\nu\to 1$ we get
\beqn
  z ~\sim~ {1\over\nu}\ln{1\over\nu} ~=~ (1\!-\!\nu) + O((1\!-\!\nu)^2) ~\to~ 0 \qmbox{for} \nu\to 1
\eeqn
The limit value is right, but the slope is $\frs12$ of what it should be.
${2\over\nu}\ln{1\over\nu}$ would have the right slope at $\nu=1$.
Therefore
\beqn
  z ~=~ {c(\nu)\over\nu}\ln{1\over\nu} \qmbox{for some function $c(\nu)$ with $c(0^+)=1$ and $c(1^-)=2$}
\eeqn
which suggests that $c(\nu)$ might always lie in interval $[1;2]$.
I prove this by showing that $c(\nu)$ is a monotone increasing function of $\nu$.
\beqn
  \qmbox{From}
  g^{-1}(\nu)={c(\nu)\over\nu}\ln{1\over\nu}
  \qmbox{we get}
  c(\nu)={\nu g^{-1}(\nu)\over \ln(1/\nu)}
\eeqn
Since $g(z)$ is smooth, also $g^{-1}(\nu)$ and $c(\nu)$ are smooth.
Since $g()$ is monotone decreasing, rather than proving $c()$ to be increasing,
it is equivalently to show that
\beqn
  f(z) ~:=~ c(g(z)) ~=~...~=~ {\ln(1\!+\!z)\over\ln z-\ln\ln(1\!+\!z)}
\eeqn
is monotone decreasing in $z$. For this, it is sufficient to show
\beqn
  0 ~>~ f'(z) ~=~ ... ~=~ {\ln z-\ln\ln(1\!+\!z)-{1+z\over z}\ln(1\!+\!z)+1 \over (1\!+\!z)[\ln z-\ln\ln(1\!+\!z)]^2}
  ~=:~ {h(z)\over\text{denominator}}
\eeqn
Since $h(0^+)=0$, it is sufficient to show $h'(z)<0$:
\beqn
  h'(z) ~=~...~=~ {[\ln(1\!+\!z)]^2-{z^2\over 1+z} \over z^2\ln(1\!+\!z)} ~<~ 0
        ~~~~\iff~~~~ r(z) ~:=~ \ln(1\!+\!z)-{z\over\sqrt{1\!+\!z}} ~<~ 0
\eeqn
Since $r(0^+)=0$, it is sufficient to show $r'(z)<0$:
\beqn
  r'(z) ~=~...~=~ {\sqrt{1\!+\!z}-(1+\fr12 z)\over(1\!+\!z)^{3/2}} ~<~ 0,
  \qmbox{which is true, since} 1\!+\!z ~<~ (1\!+\!\fr12 z)^2
\vspace{-2ex}
\eeqn
\qed
\end{proof}

\paradot{Approximation of $c(\nu)$}
In Appendix~\ref{app:bimpr} I discuss approximations for $c(\nu)$.
In the main text I simply replace $c(\nu)$ by 1, i.e.\
$z={1\over\nu}\ln{1\over\nu}$ which has the right asymptotics for
the $\nu\to 0$ ($m\ll n$) regime I am primarily interested in and
still the right limit for $\nu\to 1$. I also found that this
choice is consistent with the other regimes in \req{tab:mblim}, in
particular with $\b\to 0$. Back in $(n,m,\b)$ notation we get
\beqn
  \b ~=~ {n\over z} ~=~ {n\over{1\over\nu}\ln{1\over\nu}} ~=~ {m\over\ln{n\over m}} ~=:~ \b^*
\eeqn

\paradot{$\b\to 0$}
I finally consider the $\b\to 0$ regime.
Using the general recurrence $\Psi(\b)=\Psi(\b+1)-{1\over\b}$ in \req{eq:dSdb2} we get
\beqn
  0 ~=~ -{m\!-\!1\over\b} + \Psi(n\!+\!\b) - \Psi(\b\!+\!1)
  ~\to~ -{m\!-\!1\over\b} + \Psi(n) - \Psi(1)
  ~\stackrel{\req{eq:Psi}}\sim~ -{m\!-\!1\over\b} + \ln n
\eeqn
Solving this w.r.t.\ $\b$ we get $\b={m-1\over\ln n}$.
This has net yet the right form
but since $0\leq{\ln m\over\ln n}\leq {m-1\over\ln n}=\b\to 0$, we can write this as
\beqn
  \b ~=~ {m-1\over\ln n} ~\sim~ {m-1\over(1-{\ln m\over\ln n})\ln n} ~=~ {m-1\over\ln{n\over m}}
\eeqn
which apart from the $-1$ is consistent with the $\b$-expressions
in the other regimes.

\section{Proof of Theorem~\ref{prop:redSbstar}}\label{app:redSbstar}

I first prove Theorem~\ref{prop:redSbstar} for $m<n$.
Inserting \req{eq:bstar} into \req{eq:redSbUB} and
abbreviating $\nb:={n\over m}>1$ we get after rearranging terms
\bqan
  R^{\b^*}_S ~\leq~ \Rub^{\b^*}_S
  &=& \CL_w(\A) - m\ln m + \sum_{j\in\A}\fr12\ln{n_j\over 2\pi} + 0.082
  + m\!\cdot\!f(\nb) - \fr12\ln(\nb\ln\nb+1)
\\
  \text{where} & f(\nb) & :=~ \ln\ln\nb + \nb\ln(1+{1\over\nb\ln\nb}) + {\ln(\nb\ln\nb+1)\over\ln\nb}
\eqan
\begin{wrapfigure}{r}{5cm}\vspace{-3ex}
\begin{tikzpicture}
\begin{axis}[
  xlabel=$\nu$,
  title=$f({1\over\nu})-h({1\over\nu})$,
  title style={at={(0.4,0.5)}, anchor=center},
  enlargelimits=false,
  xlabel style={at={(0.9,0.02)}, anchor=south},
  ylabel style={at={(0.00,0.85)}, anchor=north},
  footnotesize,
  xmin=-0.005,xmax=1.005,
  ymin=0,ymax=0.5,
  xtick = {0,0.2,0.4,0.6,0.8,1},
  ytick = {0,0.1,0.2,0.3,0.4,0.5},
  width=5.7cm, height=4cm,
  legend style={ at={(1,0.4)}, anchor=east, draw=none },
  line width=1pt,
]
\addplot[color=blue,mark=none] table {
0.00  0.0
1e-9  0.099146311
1e-6  0.192178430
1e-4  .246565347
0.002  .305700769
0.01	0.352495843
0.03	0.393294354
0.05	0.414790955
0.1	0.445134043
0.15	0.461575924
0.2	0.470868151
0.25	0.475308283
0.3	0.475946058
0.35	0.473320538
0.4	0.467709435
0.45	0.459230506
0.5	0.447884916
0.55	0.433572171
0.6	0.41608659
0.65	0.395096669
0.7	0.370102032
0.75	0.340352807
0.8	0.304694522
0.85	0.26124013
0.9	0.206551459
0.95	0.132862865
0.97 0.093521881
0.99 0.041525051
1.00 0.000000000
};
\addplot[color=red,mark=none,style=dotted,samples=10,domain=0:1] {0.48};
\end{axis}
\end{tikzpicture}
\end{wrapfigure}
It is easy to see that $f(\nb)\sim\ln\ln\nb$ for $\nb\to\infty$ and
$f(1^+)=1$. This motivates the approximation
$h(\nb):=1+\ln(1\!+\!\ln\nb)$, which has the correct $\nb\to 1$
limit and correct $\nb\to\infty$ asymptotics. Next I upper bound
$f(\nb)-h(\nb)$. Since $f-h$ is continuous and tends to zero at 0
and at 1, it is upper bounded by some finite constant. It is easy
to see graphically and numerically but quite cumbersome to show
analytically that $f({1\over\nu})-h({1\over\nu})$ is concave for
$0<\nu<1$ with maximum 0.476... at $\nu=0.284...$,
hence $f(\nb)\leq 1.48+\ln(1\!+\!\ln\nb)$.
Now using $\fr12\ln{n_j\over 2\pi}\,\dot=\,\fr12\ln n_j -0.92$ and
$-\fr12\ln(\nb\ln\nb+1)\leq-\fr12\ln\nb$ (use $\ln x\geq 1-1/x$ on
the inner $\ln\nb$) leads to the desired bound \req{eq:redSbstar} for $m<n$.

For $m=n$, we have $n_i=1\forall i$, hence
$P^{\smash{\v{\hat\t}}}_{iid}=({1\over n})^n$ from \req{IIDjoint},
and $\b^*=\infty$, hence $S(x_{t+1}=i|x_{1:t})=w^t_i$ from
\req{eq:Sdef}, so $S(x_{1:n})=\CL(\A)$. Inserting this into
\req{eq:redDef} gives $R^\infty_S(x_{1:n}) = \CL(\A) - n\ln n$. On
the other hand, \req{eq:redSbstar} for $m=n$ is $\CL(\A)-n\ln
n+0.56n+0.082$, which is clearly larger. \qed

1.48 is a quite crude upper bound on $f(1^+)=1$. By introducing
ugly other terms, one can improve 1.48 to 1 and hence $0.56m$ to
$0.082m$ in bound \req{eq:redSbstar}.

\section{Proof of Theorem~\ref{prop:redSbt}}\label{app:redSbt}

\paradot{$S$ and $R$ for variable $\vec\b$}
For variable $\vec\b$ the joint $S$ distribution and its redundancy are
\bqa
  S^{\vec\b}(x_{1:n}) &=& \prod_{t=0}^{n-1} S^{\b_t}(x_{t+1}|x_{1:t})
  ~=~ \prod_{t=0}^{n-1}{1\over t+\b_t} \prod_{t\in\oldx}n^t_{x_{t+1}} \prod_{t\in\newx}\b_t w^t_{x_{t+1}}
\nonumber\\ \label{eq:RbvItoV}
  R^{\vec\b}_S &=& \underbrace{\phantom{\sum_j}\nq\!\!\!\CL_w(\A)}_{(I)}
  \underbrace{+ \sum_{t=1}^{n-1}\ln(t+\b_t)}_{(II)}
  \underbrace{- \sum_{\nq t\in\newx\setminus\{0\}\nq\nq\nq}\ln\b_t}_{(III)}
  \underbrace{+ \sum_{j\in\A}\ln{n_j^{n_j}\over\G(n_j)}}_{(IV)}
  \underbrace{- \phantom{\sum_j}\nq\!\!\! n\ln n}_{(V)}
\eqa
In the redundancy I removed the $\ln(0+\b_0)-\ln(\b_0)$
contribution. Note that $S^{\vec\b}$ and $R^{\vec\b}_S$ are now not
only dependent on the counts but also on exactly when new symbols
appear, i.e.\ on the $\newx$ set. (for $\vbs=m_t/\ln{t+1\over m_t}$
the dependence is in a sense mild though). The sums cannot be
represented as Gamma functions anymore.

(I) and (IV) and (V) are independent of $\newx$, for (I) by assumption.
(III) obviously depends on $\newx$ but also (II) via $m_t$ in $\b_t$.

\paradot{Redundancy change when swapping two consecutive symbols}
I first show that the earlier new symbols appear, the larger is
$R^\vbs_S$. This fact heavily relies on the specific form of
$\vbs$, which makes the proof cumbersome. Assume at time $t$ there
is an old symbol but at time $t+1$ there is a new symbol for some
$t\in\{1...n-1\}$. That is, $t\in\oldx$ and $m_{t-1}=m_t$ but
$t+1\in\newx$ and $m_{t+1}=m_t+1$. Note that $m_t<t$, and
$x_{t+1}\neq x_t$, since $x_t$ is old and $x_{t+1}$ is new. I now
swap $x_t$ with $x_{t+1}$. I mark all quantities that change by a
prime $'$. That is, $x'_t=x_{t+1}$ and $x'_{t+1}=x_t$. Now $x_t$ is
new ($t-1\in\newx'$) and $x_{t+1}$ is old ($t\in\oldx'$). Further
$m'_t=m_t+1$, and $\b'^{*}_t=m'_t/\ln{t+1\over m'_t}$. Quantities
for all other $t$ remain unchanged. Only one term in (II) and one
term in (III) are affected. The change in redundancy is therefore
\bqan
  \Delta R(t,m_t) &:=& R^{\vec\b'^*}_S - R^\vbs_S
  ~=~ \ln(t\!+\!\b'^*_t) - \ln(t\!+\!\b^*_t) -\ln\b'^*_{t-1} + \ln\b^*_t
\\
  &=& \ln(t\!+\!{m_t+1\over\ln{t+1\over m_t+1}})
    - \ln(t\!+\!{m_t\over\ln{t+1\over m_t}})
    - \ln{m_t\over\ln{t\over m_t}}
    + \ln{m_t\over\ln{t+1\over m_t}}
\eqan
where I have used $m'_{t-1}=m_{t-1}=m_t$.
Collecting terms we get
\beqn
  \Delta R(t,m) ~=~ \ln{\ln{t\over m} + {m+1\over t}{\ln{t\over m}\over\ln{t+1\over m}}
                        \over \ln{t+1\over m}+{m\over t}} ~\stackrel?>~ 0 \qmbox{for} 0<m<t
\eeqn
This is positive, if the numerator is larger than the denominator.
Rearranging terms we can write this as
\beqn
  f_{t,m}(0)\stackrel?>f_{t,m}(1),\qmbox{with}
  f_{t,m}(a) ~:=~ \ln{t+a\over m} - {m+a\over t}{\ln{t\over m}\over\ln{t+a\over m+a}}
\eeqn
Another change in variables gives us
\beqn
  f_{\nb}(x) := f_{t,m}(a) = \ln[\nb(1\!+\!x)] - ({1\over\nb}\!+\!x){\ln\nb\over\ln{1+x\over{1/\nb+x}}},
  \qmbox{where} a=x\!\cdot\!t \mbox{ and } \nb:={t\over m}>1
\eeqn
By differentiation one can show that $f_{\nb}(x)$ is a
decreasing function in $x$ for all $x>0$ and $\nb>1$, which
implies $f_{t,m}(0)>f_{t,m}(1)$ and hence $\Delta R(t,m)>0$.

\paradot{Bounding the redundancy for all new symbols first}
We can repeat swapping symbols and thereby increasing
$R^{\vec\b}_S$ until all symbols appear first before they repeat,
that is, $m_t=\min\{t,m\}$ and $\newx=\{0,...,m-1\}$. For this oder
we have
\beqn
  \b^*_t={t\over\ln{t+1\over t}}\geq t^2 \qmbox{for} t\leq m,
  ~~~~\qmbox{and}~~~~ b^*_t={m\over\ln{t+1\over m}} \qmbox{for} t\geq m
\eeqn
I now bound each of the 5 terms (I)-(V) in $R^{\vec\b}_S$, where I split
the sum in (II) and merge in (III).
\beqn
  \text{(I)} =\underline{\CL_w(\A)} \qmbox{and} \text{(V)} = \underline{-n\ln n} \qquad \text{[nothing to do here]}
\eeqn
\beqn
  \text{(IIa)+(III)} ~=~ \sum_{t=1}^{m-1}\ln(t\!+\!\b^*_t) - \sum_{t=1}^{m-1}\ln\b^*_t
  ~=~ \sum_{t=1}^{m-1} \ln(1\!+\!{t\over\b^*_t})
  ~\leq~ \sum_{t=1}^{m-1} {t\over\b^*_t}
  ~\leq~ \sum_{t=1}^{m-1} {1\over t} ~\leq~ \underline{1+\ln m}
\eeqn
\beqn
  \text{(IIb)} ~=~ \sum_{t=m}^{n-1}\ln(t\!+\!\b^*_t)
  ~=~ \sum_{t=m}^{n-1} \ln t + \sum_{t=m}^{n-1}\ln(1+{m/t\over\ln{t+1\over m}})
\eeqn
Using \req{eq:GammaULB} and \req{eq:lntsum}, the first terms can be bound by
\bqan
  \text{(IIb1)} &=& \sum_{t=m}^{n-1} \ln t ~=~ \ln\G(n)-\ln\G(m) \\
       &\leq& \underline{(n\!-\!\fr12)\ln n - n + 1 - (m\!-\!\fr12)\ln m + m - \ln\sqrt{2\pi}}
\eqan
I split the second term in (IIb) further into $t<2m$ and $t\geq 2m$:
\bqan
  & & \nq \text{(IIb2)} ~=~ \sum_{t=m}^{\nq\nq\min\{2m-1,n-1\}\nq\nq} \ln(1+{m/t\over\ln{t+1\over m}})
  ~\mathop{\leq}\limits^{\scriptstyle\nq\nq\nq\ln{t+1\over m}\geq 1-{m\over t+1}\nq\nq\nq\atop\textstyle\rule{0ex}{2.5ex}\downarrow}~
  \sum_{t=m}^{2m-1} \ln(1+\overbrace{(t\!+\!1)m\over t(t\!+\!1\!-\!m)}^{>1})
  ~\leq~ \sum_{t=m}^{2m-1} \ln{2m(t\!+\!1)\over t(t\!+\!1\!-\!m)}
\\
  & & \nq ~=~ m\ln(2m) + \ln\fr{2m}m - \ln m!
  ~\leq~ \underline{m\ln(2m) + \ln 2 -(m+\fr12)\ln m + m - \ln\sqrt{2\pi}}
\eqan
If $2m<n$
\beqn
  \text{(IIb3)} ~=~ \sum_{t=2m}^{n-1} \ln(1+{m/t\over\ln{t+1\over m}})
  ~\leq~ \sum_{t=2m}^{n-1} {m/t\over\ln{t+1\over m}}
  ~\leq~ {3\over 2}\sum_{t=2m}^{n-1} {1\over{t+1\over m}\ln{t+1\over m}}
\eeqn
where I have used ${t+1\over t}=1+1/t\leq 1+1/2m\leq 3/2$.
If we upper bound the sum by an integral and set $x={t+1\over m}$, we get
\beqn
  \leq~ {3\over 2}\int_{2m-1}^{n-1} {\d t\over{t+1\over m}\ln{t+1\over m}}
  ~=~ {3\over 2}\int_2^{n/m} {m\d x\over x\ln x}
  ~=~ {3\over 2}m[\ln\ln{n\over m} \!-\! \ln\ln2]
  ~\leq~ \underline{{3\over 2}m[\ln\ln{2n\over m} \!-\! \ln\ln2]}
\eeqn
If $2m\geq n$, (IIb3)=0. We can stich both cases together by either
using a $\max$-operation, or as I have done by increasing
$n\leadsto 2n$, which ensures that the last expression is never negative.
\beqn
  \text{(IV)} ~=~ \sum_{j\in\A} \ln{n_j^{n_j}\over\G(n_j)}
  ~\leq~ \sum_{j\in\A} [\fr12\ln n_j+n_j-\ln\sqrt{2\pi}]
  ~=~ \underline{n-{m\over 2}\ln(2\pi) + \sum_{j\in\A} \fr12\ln n_j}
\eeqn

\paradot{Putting everything together}
We can now collect all underlined terms together and get
\bqan
  R^\vbs_S(x_{1:n}) &\leq&
  \textstyle \CL_w(\A) - (m\!-\!1)\ln m + \sum_{j\in\A}\fr12\ln n_j - \fr12\ln n \\
  &+& \fr32 m\ln\ln\fr{2n}m + m[2\!+\!\ln 2\!-\!\fr32\ln\ln 2\!-\!\ln\sqrt{2\pi}] + [2-\ln\pi]
\eqan
Since the all-new-symbols-first order has maximal redundancy,
the bound holds in general.

\section{Improvement on \boldmath$\b^*$}\label{app:bimpr}

Here I generalize $\b^*$ to $\b^c:=\b^*/c$.
From Appendix~\ref{app:bopt} we know that for $n\to\infty$,
the exact optimal $\b^c$ has $1\leq c(\nu)\leq 2$.

\begin{wrapfigure}{r}{5cm}\small
\vspace*{-4ex}\begin{tikzpicture}
\begin{axis}[
  xlabel=$\nu$, ylabel=$c(\nu)$,
  enlargelimits=false,
  xlabel style={at={(0.93,0.02)}, anchor=south},
  ylabel style={at={(0.00,0.85)}, anchor=north},
  footnotesize,
  xmin=-0.005,xmax=1.005,
  ymin=0.995,ymax=2.005,
  xtick = {0,0.2,0.4,0.6,0.8,1},
  width=5.7cm,
  legend style={ at={(1,0.4)}, anchor=east, draw=none, fill=none },
  line width=1pt,
]
\addplot[color=blue,mark=none] table {
1.000000000	2.000000000
0.95532674	1.984994731
0.902446288	1.966918433
0.856951318	1.951066152
0.792103884	1.927931446
0.750568599	1.912740659
0.702006612	1.894563162
0.645880355	1.87292144
0.581976707	1.847319575
0.510632163	1.817277554
0.433015504	1.782386181
0.351424616	1.742387075
0.269485224	1.697277397
0.192046875	1.647427761
0.124518775	1.593682561
7.15E-02	1.537390672
3.52E-02	1.480308797
1.43E-02	1.424356044
4.57E-03	1.371279802
1.09E-03	1.322367006
1.80E-04	1.278322135
1.91E-05	1.239328466
1.19E-06	1.20520869
5.44E-10	1.150005647
4.67E-15	1.109096746
3.45E-34	1.057105132
0.0000000	1.000000000
};
\addplot[color=red,mark=none,style=loosely dotted,samples=50,domain=0.001:1]
{1+x^0.27};
\addplot[color=green,mark=none,style=dashed,samples=50,domain=0.001:1]
{1+ln(1-ln(x))/ln(1/x)};
\legend{$c_{\rm exact}(\nu)$,$1+\nu^{0.27}$,$1+{\ln(1-\ln(\nu))\over\ln(1/\nu)}$}
\end{axis}
\end{tikzpicture}
\vspace*{-4ex}
\end{wrapfigure}
\paradot{Discussion of $c(\nu)$}
The figure on the right plots the exact function $c(\nu)$
implicitly given by $g({c\over\nu}\ln{1\over\nu})=\nu$.
While it is true that $c\to 1$ for $\nu\to 0$, $\b^1$ only starts
to have lower redundancy than $\b^2$ for very small values of
$\nu$, namely $\nu\lesssim 10^{-2}$. So in practice, $c=2$ should
perform better except for $m\lesssim{n\over 100}$.
We could try to find approximate $c(\nu)$ in various ways,
e.g.\ $c(\nu)=1+\nu^{0.27}$ makes
$|g({c(\nu)\over\nu}\ln{1\over\nu})-\nu|<0.002$.
$c(\nu)=1+{\ln(1-\ln(\nu))\over\ln(1/\nu)}$ is theoretically motivated
by an extra iteration of $g$.

\paradot{Constant $\b^c$}
The proof of Theorem~\ref{prop:redSbstar} in Appendix~\ref{app:redSbstar} still goes through
for $\b^*\leadsto\b^c$ with
now
\beqn
  f_c(\nb) ~=~ \ln(c\ln\nb) + \nb\ln(1+{1\over c\nb\ln\nb}) + {\ln(c\nb\ln\nb+1)\over c\ln\nb}
\eeqn
$f_c-1-\ln(1\!+\!\ln\nb)$ is still upper bounded by 1.48 for all $1\leq c\leq 2$, so
the upper bound in \req{eq:redSbstar} is still valid for
$\b^*\leadsto\b^c$.

\paradot{Variable $\vec\b^c$}
The proof of Theorem~\ref{prop:redSbt} in Appendix~\ref{app:redSbt}
breaks down for $c>1$. $R$ still increases when moving new symbols
earlier if many symbols have already appeared but actually
decreases when only a few symbols have appeared so far. That is,
$\Delta R(t,m)>0$ for large $m$ as before, but $\Delta R(t,m)<0$
for small $m$. $R$ is therefore maximized if all new symbols appear
somewhere in the middle of the sequence. This may lead to a proof
and bound analogous to the $c=1$ case.

Here is a simpler proof with a possibly cruder bound. I reduce
$R^{\smash{\vec\b^{\vec c}}\!\!}_S$ to $R^\vbs_S$ and also allow
for time-dependent $c=c_t$. From expression \req{eq:RbvItoV} it is
easy to see that
\beqn
  R^{\smash{\vec\b^{\vec c}}}_S ~=~ R^\vbs_S + \sum_{t=1}^{n-1} \ln{1+\b^{c_t}_t\over 1+\b^*_t}
  + \sum_{\nq t\in\newx\setminus\{0\}\nq\nq} \ln c_t
  ~\leq~ (m\!-\!1)\ln 2
\eeqn
where I have exploited $c_t\leq 2$ and $\b^{c_t}_t\leq\b^*_t$ for
$c_t\geq 1$. That is, if we add another $(m-1)\ln 2$ to bound
\req{eq:redSbt} it becomes valid for $\vec\b^{\vec c}$ for any
choice of $1\leq c_t\leq 2$.

\section{Bayesian sub-alphabet weighting}\label{app:SAW}

The Bayesian sub-alphabet weighting estimator \cite{Tjalkens:93}
averages over the $\KT_{\A'}$ estimators for all possible
$\A'\subseteq\X$ with a prior uniform in $|\A'|$ and uniform in
$\A'$ given $|\A'|$:
\beq\label{eq:PBayesG}
  P_\Bayes(x_{1:n}) ~=~ \sum_{\nq\A':\A\subseteq\A'\subseteq\X\nq} \text{Prior}(\A')P_{\KT_{\A'}}(x_{1:n})
  \qmbox{with} \text{Prior}(\A') ~=~ {1\over D{D\choose|\A'|}}
\eeq
This mixture of exponential size $2^{D-m}$ can be computed in time
and space linear in $D$ \cite{Tjalkens:93}:
\beq\label{SAWGiMix}
  P_\Bayes(x_{1:n}) ~=~ \sum_{i=1}^D{1\over D}G_i(x_{1:n})
\eeq
with the following sequential representation of $G_i$:
\beqn
  G_i(x_{t+1}|x_{1:t}) ~:=~ \left\{
  \begin{array}{ccl}
    \displaystyle           0                                    & \qmbox{if} & m_{t+1}  >  i \\[1ex]
    \displaystyle        {n^t_{x_{t+1}}+\fr12\over t+i/2}        & \qmbox{if} & m_{t+1}\leq i \qmbox{\&} x_{t+1}\in\A_t \\[1ex]
    \displaystyle {i-m_t\over D-m_t}\!\cdot\!{\frs12\over t+i/2} & \qmbox{if} & m_{t+1}\leq i \qmbox{\&} x_{t+1}\not\in\A_t
  \end{array} \right.
\eeqn
This is still a factor $D$ slower than all other estimators
considered in this paper.

A relation to $S$ can be enforced as follows: First, generalize
$P_{\KT_{\A'}}\equiv P^{\v 1/2}_{\DirM_{\A'}}$ to
$P^{\v\a}_{\DirM_{\A'}}$, then
\beqn
  G^\a_{\b/\a}(x_{t+1}|x_{1:t}) ~\stackrel{\a\to 0}\longrightarrow~ S^\b(x_{t+1}|x_{1:t}) \qmbox{for} w^t_i={1\over D-m_t}
\eeqn
While \req{eq:PBayesG} mixes $G_i$'s, $S^{\b^*}$ maximizes $S^\b$.
So $S^{\b^*}$ with uniform renormalized weights might be an
integer-relaxed, maximum-likelihood approximation of Bayesian
sub-alphabet weighting with Haldane prior. There are several
caveats though.

An anonymous reviewer suggested the following alternative
representation:
\bqan
  P_\Bayes(x_{t+1}|x_{1:t})
  & \propto & P_\Bayes(x_{1:t+1})
  \;=\; \sum_{\nq\A':\A_{t+1}\subseteq\A'\subseteq\X\nq} \text{Prior}(\A')P_{\KT_{\A'}}(x_{1:t+1})
\\
  & = & \sum_{\nq\A':\A_{t+1}\subseteq\A'\subseteq\X\nq} \text{Prior}(\A')
  {\G(\fr12|\A'|)\over\G(t\!+\!1\!+\!\fr12|\A'|)}\prod_{i\in\X}{\G(n^{t+1}_i\!+\!\fr12)\over\G(\fr12)}
\\
  & = & (n^t_{x_{t+1}}\!+\fr12)\bigg(\prod_{i\in\X}{\G(n^t_i\!+\!\fr12)\over\G(\fr12)}\bigg)
  {1\over D}\sum_{\A':\A_{t+1}\subseteq\A'\subseteq\X} {D\choose|\A'|}^{-1}\!\!\!
  {\G(\fr12|\A'|)\over\G(t\!+\!1\!+\!\fr12|\A'|)}
\\
  & \propto & (n^t_{x_{t+1}}\!+\fr12)
  \sum_{m'=m_{t+1}}^D {D-m_{t+1}\choose m'-m_{t+1}}{D\choose m'}^{-1}
  {\G(\fr12 m')\over\G(t\!+\!1\!+\!\fr12 m')}
\eqan
The latter sum can have two values, depending on whether $x_{t+1}$
is new ($m_{t+1}=m_t+1$) or old ($m_{t+1}=m_t$).
We can hence write this as
\bqan
  P_\Bayes(x_{t+1}=i|x_{1:t}) & \propto &
  \left\{ { (n^t_i+\fr12)\g^t_{m_t} \qmbox{if} n^t_i>0, \atop
            ~~~\fr12\g^t_{m_t+1}~~~ \qmbox{if} n^t_i=0, } \right.
\\
  \qmbox{where}~~~
  \g^t_m &:=& \sum_{m'=m}^D
  {D-m\choose m'-m} {D\choose m'}^{-1} {\G(\fr12 m')\over\G(t\!+\!1\!+\!\fr12 m')}
\eqan
By summation, the normalizer can be worked out to be
$(t+\fr12 m_t)\g^t_{m_t} + \fr12 (D-m_t)\g^t_{m_t+1}$,
which allows us to rewrite the result as
\beq\label{SAWbt}
  P_\Bayes(x_{t+1}=i|x_{1:t}) ~=~
  \left\{ { { n^t_i+1/2  \over t+m_t/2+\b_t  } \qmbox{if} n^t_i>0, \atop
            {\b_t/(D-m_t)\over t+m_t/2+\b_t} \qmbox{if} n^t_i=0, } \right.
  \qmbox{with} \b_t:={D- m_t\over 2} {\g^t_{m_t+1}\over\g^t_{m_t}}
\eeq
This has the same structure as \req{eq:Svbs} apart from the $+1/2$
and $+m_t/2$, which is due to using the KT prior rather than a
Haldane prior, and apart from a significantly more complex
expression for $\b_t$, which I expect to be approximately $\b^*_t$.
An advantage of \req{SAWbt} over \req{SAWGiMix} is that not only
can it be used to compute $P_\Bayes(x_{t+1}|x_{1:t})$ in time
$O(D)$ but also the cumulative distribution
$P_\Bayes(X_{t+1}<x_{t+1}|x_{1:t})$, required for arithmetic
coding.

\section{Algorithms \& Applications \& Computation Time}\label{app:Comp}

\def\gSAWBayes{SAW-\Bayes} 
All estimators discussed in this paper, except for Bayesian
sub-alphabet weighting (\gSAWBayes) require just $O(1)$ time and
$O(D)$ space for computing $P(x_{t+1}|x_{1:t})$ and for updating
the relevant parameters like
counts $n_i$, %
the number $m_t$ of symbols seen so far, %
parameter $\beta^*_t$, etc. %
Space can be reduced to $O(m)$ by hashing.
Only \gSAWBayes\ requires $O(D)$ time per $t$ and $O(D)$ space.

Knowledge of $P(x_{t+1}|x_{1:t})$ for all $t$ allows to determine
code length, likelihood, and redundancy of $x_{1:n}$, relevant and
sufficient e.g.\ for model selection such as MDL. Many other tasks
like data compression via arithmetic encoding and Bayesian decision
making require $P(X_{t+1}=i|x_{1:t})$ for all (or at least
multiple) $i\in\X$, which naively requires $O(D)$ time per $t$.

For arithmetic encoding, we actually only need the conditional
distribution function $P(X_{t+1}<x_{t+1}|x_{1:t})$ at $x_{t+1}$ for
$\X\cong\{1,...,D\}$. For $\DirM$ and $S$ this can be computed in
time $O(\log D)$ as follows: Maintain a binary tree of depth
$\lceil\log_2 D\rceil$ with counts $n_1,n_2,...,n_D$ at the leafs
in this order. Inner nodes store the sum of their two children. In
this tree, computing $\sum_{i<x_{t+1}}n_i$ and updating
$n_{x_{t+1}}\leadsto n_{x_{t+1}}+1$ can be performed in time
$O(\log D)$ by accessing/updating the single path from root to leaf
$x_{t+1}$. It is clear how this allows to compute
$\DirM(X_{t+1}<x_{t+1}|x_{1:t})$ in time $O(\log D)$ and space
$O(D)$. Time can be reduced to $O(\log m)$ and space to $O(m)$ by
maintaining a self-balancing binary tree of only the non-zero
counts, which is rebalanced when inserting new non-zero counts.

To compute $S^\vbs(X_{t+1}<x_{t+1}|x_{1:t})$ in time $O(\log D)$,
we have to additionally and in the same way store and maintain
$\tilde w_1, \tilde w_2,...,\tilde w_D$ at the leafs (and their sum
at inner nodes), where $\tilde w_i=w^0_i$ if $i\not\in\A_t$ and
$\tilde w_i=0$ else.

Expectations $\sum_i f(i)P(X_{t+1}=i|x_{1:t})$ can easily be
updated in $O(1)$ time with $O(m)$ space, hence Bayes-optimal
decisions $\arg\min_{y\in\cal Y} \sum_i
\text{Loss}(y,i)P(X_{t+1}=i|x_{1:t})$ can be updated in $O(|{\cal
Y}|)$ time.

A similar tree construction can speed up \gSAWBayes\ \req{SAWGiMix}
from $O(D^2)$ to $O(D\log D)$, or one uses \req{SAWbt}, but time
$O(D)$ seems not further improvable. This renders \gSAWBayes\
impractical for large-alphabet data compression.

Finally, if computation time is at a premium and the logarithm in
$\b^*_t$ too slow, one can with virtually no loss in compression
quality update $1/\ln{t+1\over m_t}$ only whenever $m_t$ or $t$
have changed by more than 10\% since the last update.

\section{Experiments}\label{app:Exp}

\pgfplotsset{ 
  every axis/.append style={
  line width=1pt,
  enlargelimits=false,
 	log basis x=2,
  xlabel style={at={(0.5, 0.08)}},
	ylabel style={at={(0.02,0.8)}, anchor=north},
	}
}
\pgfplotsset{ 
    legend style={
        at={(1.03,1)},
        anchor=north west,
        cells={anchor=west},
        font=\footnotesize,
        rounded corners=2pt,
        line width=1pt
    },
    no markers,
    scale only axis,
}
\pgfplotsset{ 
UsedAlphm/.style={black, dotted},
SbXt/.style={black, solid},
SbXtUB/.style={black, dashdotted},
KTX/.style={green, solid},
Perks/.style={brown, solid},
DirMXt/.style={blue, solid},
SAWBayes/.style={red, solid},
SSDC/.style={yellow, solid},
SbX/.style={black, densely dashed},
SbXUB/.style={black, loosely dashed},
KTACLA/.style={green, densely dashed},
DirMX/.style={blue, densely dashed},
Comb/.style={red, densely dashed},
KTAo/.style={green, loosely dotted},
LLto/.style={red, loosely dotted},
Ho/.style={brown, loosely dotted},
}

\def\gUsedAlphm{$|\A|$ (not a CL)} 
\def\gSbXt{$S^{\vbs\!/2}$} 
\def\gSbXtUB{$\smash{\overline{S}}^\vbs$} 
\def\gKTX{$\KT_\X$} 
\def\gPerks{Perks} 
\def\gDirMXt{D$\vec{\smash{\rm i}}$rM$^*$} 
\def\gSAWBayes{SAW-\Bayes} 
\def\gSSDC{SSDC} 
\def\gSbX{$S^{\b^*\!\!\!/2}$} 
\def\gSbXUB{$\smash{\overline{S}}^{\b^*}$} 
\def\gKTACLA{$\KT_\A\!+\!\ln{D\choose m}$} 
\def\gDirMX{$\DirM^*$} 
\def\gKTAo{$\KT_\A$-{\sc Oracle}} 
\def\gLLto{LL$\v\t$-{\sc Oracle}} 
\def\gHo{H-{\sc Oracle}} 

\begin{figure*}
\pgfplotstableread{
     Record.Nr	  Seq.Length.n	    Tot.Alph.s	   Used.Alph.m	         Theta	       zipfexp	          base	          LLto	            Ho	         DirMX	        DirMXt	          SSDC	           SbX	          SbXt	           KTX	          KTAo	        KTACLA	      SAWBayes	         Perks
             1	          1025	         10000	            52	          zipf	           2.0	   2035.200895	   -323.519027	   -395.940749	     -7.920827	      0.098950	     74.253107	     -8.001435	      0.000000	   2335.168105	   -292.634664	     29.809371	     29.509351	     61.194491
             2	          1025	         10000	            50	          zipf	           1.9	   2131.357883	   -317.975311	   -384.756630	     -6.858185	      0.088696	     68.231304	     -6.928862	      0.000000	   2346.961001	   -284.579437	     27.337112	     27.597160	     57.470288
             3	          1025	         10000	            70	          zipf	           1.8	   2581.351648	   -374.358337	   -504.292999	     -8.480348	      0.174989	    113.002706	     -8.622045	      0.000000	   2221.059812	   -374.668542	     39.374179	     33.482561	    107.915124
             4	          1025	         10000	            74	          zipf	           1.7	   2733.948399	   -405.832181	   -527.024937	     -8.048718	      0.193139	    149.714034	     -8.205989	      0.000000	   2197.077945	   -391.856284	     41.865225	     34.601824	    119.755901
             5	          1025	         10000	           108	          zipf	           1.6	   3284.669750	   -487.227967	   -706.232330	    -13.309677	      0.458530	    264.742680	    -13.682077	      0.000000	   2007.332521	   -527.631283	     65.574647	     44.888716	    222.002422
             6	          1025	         10000	           144	          zipf	           1.5	   3784.387723	   -595.543259	   -873.468200	    -11.635390	      0.809785	    406.362628	    -12.321707	      0.000000	   1828.936984	   -654.548159	     95.648755	     57.534654	    354.006854
             7	          1025	         10000	           178	          zipf	           1.4	   4347.326434	   -655.395050	  -1032.487498	    -15.827928	      1.207072	    536.989781	    -16.791057	      0.000000	   1659.015136	   -779.803434	    110.184678	     52.760324	    485.194655
             8	          1025	         10000	           231	          zipf	           1.3	   5035.285987	   -720.652140	  -1243.749464	    -14.699836	      1.964303	    853.517075	    -16.285102	      0.000000	   1431.154499	   -943.868302	    151.204176	     58.386599	    716.517655
             9	          1025	         10000	           304	          zipf	           1.2	   5777.398072	   -790.491047	  -1497.431845	    -14.984236	      3.481136	   1171.427652	    -17.795246	      0.000000	   1154.944437	  -1141.036197	    216.500215	     64.630420	   1064.790672
            10	          1025	         10000	           400	          zipf	           1.1	   6785.795306	   -863.827714	  -1787.897829	    -11.676210	      5.622493	   1630.185588	    -16.085086	      0.000000	    834.570152	  -1368.966278	    306.580731	     58.833308	   1573.292483
            11	          1025	         10000	           537	          zipf	           1.0	   7735.745964	   -909.680462	  -2105.938203	     -7.135929	      9.987811	   2217.967776	    -14.859133	      0.000000	    474.438833	  -1613.531275	    475.122331	     52.703583	   2377.200218
            12	          1025	         10000	           650	          zipf	           0.9	   8449.308611	   -774.076339	  -2300.690104	     -1.227038	     12.692876	   3105.860105	    -11.174646	      0.000000	    244.962244	  -1758.451495	    642.515572	     38.242547	   3108.813987
            13	          1025	         10000	           756	          zipf	           0.8	   8934.636106	   -695.760679	  -2422.853954	      1.511794	     15.750302	   3694.931200	    -10.119750	      0.000000	     90.456410	  -1840.504247	    834.194572	     21.458060	   3848.950527
            14	          1025	         10000	           838	          zipf	           0.7	   9236.383832	   -528.278628	  -2472.292215	      2.039655	     13.376604	   4248.462424	     -9.123919	      0.000000	     16.088686	  -1862.618361	   1012.676629	      6.129599	   4464.568800
            15	          1025	         10000	           899	          zipf	           0.6	   9369.129816	   -400.779470	  -2473.329944	      2.298251	     15.931343	   4878.206748	     -9.345144	      0.000000	     -3.220404	  -1844.929289	   1173.862123	     -0.405505	   4952.016196
            16	          1025	         10000	           923	          zipf	           0.5	   9413.746100	   -288.674058	  -2464.307673	      2.093512	     11.179590	   5222.292293	     -9.186960	      0.000000	     -1.366448	  -1828.921020	   1245.065883	      2.319990	   5153.177474
            17	          1025	         10000	           939	          zipf	           0.4	   9435.347490	   -189.386453	  -2454.565445	      1.745998	     11.973353	   5340.319644	     -9.265678	      0.000000	      3.597801	  -1814.640731	   1295.760385	      7.634375	   5290.673784
            18	          1025	         10000	           950	          zipf	           0.3	   9444.754828	    -81.639610	  -2445.400606	      2.030550	     15.135230	   5490.861151	     -9.422319	      0.000000	      9.514506	  -1802.373688	   1332.887467	     13.739124	   5387.027522
            19	          1025	         10000	           982	          zipf	           0.2	   9449.456389	    -28.018874	  -2404.354454	      8.418215	     17.934921	   5491.624536	     -9.705668	      0.000000	     41.326661	  -1752.333245	   1454.453339	     45.960170	   5678.851694
            20	          1025	         10000	           974	          zipf	           0.1	   9451.972317	    -17.133608	  -2416.914240	      5.680359	     18.686971	   5619.239743	     -9.621462	      0.000000	     31.043486	  -1767.139792	   1421.875123	     35.588947	   5604.038439
            21	          1025	         10000	           966	          zipf	           0.0	   9447.158640	     -6.559758	  -2426.670205	      5.554135	     14.681609	   5491.076168	     -9.512471	      0.000000	     23.666839	  -1779.062114	   1392.108366	     28.115992	   5530.604561
}\loadedtable
\begin{minipage}{0.27\textwidth}
\begin{tikzpicture}[baseline]
\begin{axis}[
  legend to name=leg1,
  footnotesize,
  xlabel={Zipf-exp.$\gamma$},
  xtick={0,0.5,1,1.5,2}, xticklabels={0,$\fr12$,1,$\fr32$,2},
  ymin=-3000, ymax=3000,
  width=\textwidth,
  height=0.32\textheight,
  y coord trafo/.code=\pgfmathparse{(#1>0)-(#1<0))*(ln(abs(#1)+0.01/1.0)-ln(0.01/1.0))},
  ytick={-100,-10,-1,-0.1,-0.01,0,0.01,0.1,1,10,100},
  yticklabels={},
  extra y ticks={-900,-800,...,-200,-90,-80,...,-20,-9,-8,...,-2,-0.9,-0.8,...,-0.1,-0.09,-0.08,...,-0.01,-0.009,-0.008,...,-0.001,0.001,0.002,...,0.009,0.01,0.02,...,0.09,0.1,0.2,...,0.9,2,3,...,9,20,30,...,90,200,300,...,900},
  extra y tick labels={},
  every extra y tick/.style={major tick length=3pt},
  ymin=-50,ymax=50,
  restrict y to domain=-1000:1000,
]
\addplot[UsedAlphm] table [x={zipfexp}, y expr=\thisrow{Used.Alph.m}/100] {\loadedtable}; \addlegendentry{\gUsedAlphm}
\addplot[SbXt]      table [x={zipfexp}, y expr=\thisrow{SbXt}/100]        {\loadedtable}; \addlegendentry{\gSbXt}
\addplot[KTX]       table [x={zipfexp}, y expr=\thisrow{KTX}/100]         {\loadedtable}; \addlegendentry{\gKTX}
\addplot[Perks]     table [x={zipfexp}, y expr=\thisrow{Perks}/100]       {\loadedtable}; \addlegendentry{\gPerks}
\addplot[SSDC]      table [x={zipfexp}, y expr=\thisrow{SSDC}/100]        {\loadedtable}; \addlegendentry{\gSSDC}
\addplot[DirMXt]    table [x={zipfexp}, y expr=\thisrow{DirMXt}/100]      {\loadedtable}; \addlegendentry{\gDirMXt}
\addplot[SAWBayes]  table [x={zipfexp}, y expr=\thisrow{SAWBayes}/100]    {\loadedtable}; \addlegendentry{\gSAWBayes}
\addplot[SbX]       table [x={zipfexp}, y expr=\thisrow{SbX}/100]         {\loadedtable}; \addlegendentry{\gSbX}
\addplot[KTACLA]    table [x={zipfexp}, y expr=\thisrow{KTACLA}/100]      {\loadedtable}; \addlegendentry{\gKTACLA}
\addplot[DirMX]     table [x={zipfexp}, y expr=\thisrow{DirMX}/100]       {\loadedtable}; \addlegendentry{\gDirMX}
\addplot[KTAo]      table [x={zipfexp}, y expr=\thisrow{KTAo}/100]        {\loadedtable}; \addlegendentry{\gKTAo}
\addplot[LLto]      table [x={zipfexp}, y expr=\thisrow{LLto}/100]        {\loadedtable}; \addlegendentry{\gLLto}
\addplot[Ho]        table [x={zipfexp}, y expr=\thisrow{Ho}/100]          {\loadedtable}; \addlegendentry{\gHo}
\end{axis}
\end{tikzpicture}
\end{minipage}
%
\pgfplotstableread{
     Record.Nr	  Seq.Length.n	    Tot.Alph.s	            ms	         Theta	          base	          LLto	            Ho	         DirMX	        DirMXt	          SSDC	           SbX	          SbXt	           KTX	          KTAo	        KTACLA	      SAWBayes	         Perks
             1	          1025	         10000	             1	        random	     10.883777	    -10.883777	    -10.883777	     -1.136102	     -0.000189	      5.259011	     -1.136048	      0.000000	   2736.579771	    -10.883777	     -1.673436	      7.500789	      5.258261
             2	          1025	         10000	             2	        random	    509.465873	    -24.018328	    -24.020448	     -0.991881	     -0.000333	      2.672019	     -0.991771	      0.000000	   2723.096759	    -20.328078	     -2.600644	      6.524165	      2.924720
             3	          1025	         10000	             3	        random	    733.675111	    -36.032015	    -36.035807	     -1.071403	     -0.000869	      1.458431	     -1.071245	      0.000000	   2710.735400	    -29.102065	     -3.263103	      5.801395	      1.261032
             4	          1025	         10000	             4	        random	   1409.525633	    -48.252948	    -48.253910	     -0.983790	     -0.002051	     -0.592001	     -0.983515	      0.000000	   2698.170602	    -38.320812	     -4.658105	      4.336623	      0.270964
             5	          1025	         10000	             5	        random	   1241.933520	    -58.135136	    -58.146126	     -0.871162	     -0.001922	     -0.226287	     -0.870904	      0.000000	   2687.933850	    -45.375170	     -4.111960	      4.804509	     -0.340486
             6	          1025	         10000	             6	        random	   1631.079212	    -69.615926	    -69.621909	     -0.999908	     -0.002191	     -2.144230	     -0.999507	      0.000000	   2676.111172	    -54.138992	     -5.457701	      3.372729	     -0.921945
             7	          1025	         10000	             7	        random	   1657.757489	    -78.921079	    -78.941613	     -0.889021	     -0.001516	     -0.631154	     -0.888669	      0.000000	   2666.448940	    -60.841487	     -4.896366	      3.840788	     -1.042863
             8	          1025	         10000	             8	        random	   2054.316898	    -90.400651	    -90.406706	     -0.757566	     -0.010384	     -2.450846	     -0.756900	      0.000000	   2654.634362	    -69.779044	     -6.703724	      1.933351	     -0.957128
             9	          1025	         10000	             9	        random	   2014.789152	    -99.342719	    -99.350688	     -0.888302	     -0.007352	     -2.184482	     -0.887734	      0.000000	   2645.348587	    -76.258746	     -6.171111	      2.359440	     -0.972304
            10	          1025	         10000	            10	        random	   2197.948684	   -108.476542	   -108.599294	     -0.708859	     -0.008090	     -1.713974	     -0.708277	      0.000000	   2635.770439	    -83.092918	     -6.098429	      2.319454	     -0.535148
            11	          1025	         10000	            11	        random	   2414.463909	   -119.145840	   -119.198329	     -0.462878	     -0.008071	     -1.012550	     -0.461869	      0.000000	   2624.808720	    -91.365862	     -7.559927	      0.739398	      0.095909
            12	          1025	         10000	            12	        random	   2447.943136	   -127.330221	   -127.639018	     -0.494625	     -0.008733	     -0.336205	     -0.493764	      0.000000	   2616.039716	    -97.495765	     -6.965498	      1.209604	      0.564452
            13	          1025	         10000	            13	        random	   2629.345189	   -137.848573	   -137.853443	     -0.120212	     -0.015113	     -1.581273	     -0.118808	      0.000000	   2605.461736	   -105.479794	     -8.305337	     -0.259930	      1.544774
            14	          1025	         10000	            14	        random	   2188.831883	   -144.938146	   -144.975041	     -0.458547	     -0.008509	      0.507160	     -0.457743	      0.000000	   2598.000866	   -110.388090	     -6.643650	      1.266763	      1.909110
            15	          1025	         10000	            15	        random	   2300.608825	   -149.731463	   -151.767714	     -0.715958	     -0.006853	      0.148390	     -0.716000	      0.000000	   2590.909715	   -114.964848	     -4.719519	      3.050757	      2.444306
            16	          1025	         10000	            16	        random	   2570.072078	   -162.148868	   -162.375159	     -0.822870	     -0.010200	      3.713383	     -0.822185	      0.000000	   2579.928892	   -123.466720	     -6.785140	      0.839995	      3.214735
            17	          1025	         10000	            17	        random	   2726.573629	   -171.397259	   -171.444000	      0.067832	     -0.022012	      1.338065	      0.069115	      0.000000	   2570.512674	   -130.437054	     -7.379949	      0.095169	      5.061553
            18	          1025	         10000	            18	        random	   2763.794628	   -179.962562	   -180.200157	     -0.528538	     -0.015308	      3.975723	     -0.527345	      0.000000	   2561.417981	   -137.116850	     -7.741477	     -0.421140	      5.495115
            19	          1025	         10000	            20	        random	   3011.762537	   -198.042477	   -198.128381	      0.260443	     -0.025240	      5.223815	      0.262827	      0.000000	   2542.779215	   -151.011651	     -9.159473	     -2.162564	      8.551285
            20	          1025	         10000	            21	        random	   2884.598519	   -203.462504	   -203.653021	      0.022947	     -0.018667	      5.920862	      0.023744	      0.000000	   2536.957711	   -154.500995	     -6.485001	      0.343446	      9.542375
            21	          1025	         10000	            23	        random	   2857.648341	   -216.090988	   -217.668048	     -1.087022	     -0.013134	      8.170877	     -1.088467	      0.000000	   2522.347700	   -164.519742	     -4.313908	      2.164539	     11.067809
            22	          1025	         10000	            24	        random	   3176.715867	   -230.100149	   -230.197452	     -0.303640	     -0.023791	      9.277439	     -0.301397	      0.000000	   2509.360046	   -175.246021	     -9.010204	     -2.713143	     13.255911
            23	          1025	         10000	            26	        random	   3229.936757	   -244.563631	   -245.939047	     -0.708867	     -0.031168	     14.857403	     -0.706831	      0.000000	   2492.956241	   -187.190646	     -9.016025	     -3.094170	     15.811352
            24	          1025	         10000	            28	        random	   3290.713360	   -257.377555	   -259.510633	     -0.042470	     -0.028075	     14.986349	     -0.041443	      0.000000	   2478.743523	   -197.023262	     -7.061310	     -1.530824	     19.633308
            25	          1025	         10000	            30	        random	   3416.980257	   -275.381193	   -275.790957	      0.105961	     -0.048294	     18.559790	      0.108280	      0.000000	   2461.728482	   -209.731347	     -8.122915	     -2.999522	     23.120955
            26	          1025	         10000	            32	        random	   3607.844185	   -293.368583	   -293.566001	      0.738050	     -0.055069	     22.406817	      0.743811	      0.000000	   2443.229760	   -223.991142	    -10.867862	     -6.166884	     27.264214
            27	          1025	         10000	            34	        random	   3633.863553	   -303.667066	   -306.399279	      0.197663	     -0.061901	     26.065208	      0.201016	      0.000000	   2429.777276	   -233.268278	     -8.753696	     -4.490097	     30.390406
            28	          1025	         10000	            37	        random	   3664.687412	   -327.121736	   -327.869332	      0.157646	     -0.058141	     33.482498	      0.160910	      0.000000	   2407.277894	   -249.615410	     -8.230110	     -4.649928	     36.133493
            29	          1025	         10000	            39	        random	   3650.238779	   -335.605989	   -339.647104	     -1.266172	     -0.019825	     36.110398	     -1.268153	      0.000000	   2394.933420	   -257.927319	     -5.430000	     -2.323233	     38.736370
            30	          1025	         10000	            42	        random	   3867.590932	   -360.949115	   -363.354245	      1.681990	     -0.130556	     43.176837	      1.687825	      0.000000	   2370.080232	   -276.827738	     -7.851543	     -5.480948	     47.984071
            31	          1025	         10000	            45	        random	   3910.508309	   -380.915960	   -383.834913	     -0.498294	     -0.046461	     45.904377	     -0.495944	      0.000000	   2348.642326	   -292.420555	     -7.178319	     -5.574444	     52.371757
            32	          1025	         10000	            49	        random	   3962.901338	   -403.552307	   -408.912021	      1.394874	     -0.086000	     59.046453	      1.396956	      0.000000	   2322.222357	   -311.201762	     -4.578618	     -4.043321	     63.428327
            33	          1025	         10000	            52	        random	   3991.067485	   -421.840540	   -427.511456	     -0.291618	     -0.047102	     65.755172	     -0.295698	      0.000000	   2302.708077	   -325.094691	     -2.650657	     -2.950676	     68.891448
            34	          1025	         10000	            56	        random	   4086.686058	   -450.093950	   -454.231736	     -0.506831	     -0.049697	     71.949127	     -0.510592	      0.000000	   2274.620621	   -345.820136	     -2.548158	     -4.006027	     78.568279
            35	          1025	         10000	            60	        random	   4197.740221	   -476.251064	   -479.248737	      0.433579	     -0.106351	     85.391380	      0.429601	      0.000000	   2248.282515	   -364.937996	     -1.123099	     -3.788103	     89.777766
            36	          1025	         10000	            64	        random	   4340.861058	   -505.969752	   -507.885774	     -0.672122	     -0.075191	     93.619016	     -0.669673	      0.000000	   2218.190507	   -387.941648	     -3.850092	     -7.770432	     99.301340
            37	          1025	         10000	            69	        random	   4413.454286	   -534.749051	   -536.912910	      0.576218	     -0.122939	    108.702505	      0.576315	      0.000000	   2187.545627	   -409.898528	     -0.810728	     -6.366547	    114.293896
            38	          1025	         10000	            74	        random	   4569.223981	   -561.362104	   -566.471534	      0.506670	     -0.159031	    124.298330	      0.503287	      0.000000	   2156.393575	   -432.540654	      1.180855	     -6.082546	    128.454237
            39	          1025	         10000	            79	        random	   4744.026088	   -596.437787	   -599.060708	      1.801040	     -0.212713	    138.819451	      1.812603	      0.000000	   2121.998499	   -458.591272	     -0.574390	     -9.615971	    144.452351
            40	          1025	         10000	            84	        random	   4851.062533	   -626.740016	   -629.453863	      1.956087	     -0.256952	    146.410716	      1.974802	      0.000000	   2089.875462	   -482.524336	     -0.529269	    -11.418284	    159.734357
            41	          1025	         10000	            91	        random	   4879.517567	   -657.958994	   -664.766210	      1.020715	     -0.195566	    172.527333	      1.016681	      0.000000	   2052.447540	   -508.727652	      6.339082	     -7.250414	    180.625256
            42	          1025	         10000	            97	        random	   4972.784297	   -686.954334	   -695.705298	      0.768026	     -0.182735	    192.567499	      0.749430	      0.000000	   2019.654004	   -532.108639	     10.873154	     -5.134974	    199.678750
            43	          1025	         10000	           104	        random	   5185.168923	   -730.141883	   -740.620045	      1.363587	     -0.343538	    212.668855	      1.383440	      0.000000	   1972.162613	   -568.844453	      6.234900	    -12.713939	    223.501090
            44	          1025	         10000	           111	        random	   5152.320735	   -759.377850	   -769.618552	      0.106385	     -0.199459	    238.715910	      0.072022	      0.000000	   1941.181590	   -589.296560	     17.406153	     -4.609350	    246.042403
            45	          1025	         10000	           119	        random	   5375.772576	   -811.471511	   -817.864898	      0.519776	     -0.316401	    266.182755	      0.529281	      0.000000	   1890.045600	   -628.657721	     13.642997	    -12.028908	    274.517501
            46	          1025	         10000	           128	        random	   5421.540630	   -846.250990	   -860.455908	      0.015621	     -0.277626	    296.539876	      0.001833	      0.000000	   1844.596654	   -661.165855	     20.535957	     -9.439596	    306.427650
            47	          1025	         10000	           137	        random	   5500.401935	   -894.845297	   -902.730138	     -0.835906	     -0.329043	    327.233958	     -0.874289	      0.000000	   1799.470803	   -693.651449	     26.812390	     -7.665040	    338.901081
            48	          1025	         10000	           147	        random	   5812.604885	   -955.231796	   -962.411616	      0.846011	     -0.557432	    364.975773	      0.922764	      0.000000	   1736.089808	   -743.314675	     19.518669	    -20.189888	    378.756965
            49	          1025	         10000	           158	        random	   5823.054460	   -990.466942	  -1004.387145	      0.460819	     -0.440936	    410.601922	      0.448397	      0.000000	   1690.888802	   -773.794821	     34.851316	    -10.886093	    421.308207
            50	          1025	         10000	           169	        random	   5923.927207	  -1032.418232	  -1054.407905	     -1.399263	     -0.358628	    449.616084	     -1.431984	      0.000000	   1637.366123	   -812.954887	     40.727731	    -11.322547	    463.593593
            51	          1025	         10000	           181	        random	   6137.470727	  -1094.186318	  -1111.277481	      0.290380	     -0.616087	    506.903996	      0.323624	      0.000000	   1576.450529	   -858.581287	     43.404819	    -15.852403	    514.759236
            52	          1025	         10000	           194	        random	   6251.080225	  -1149.531451	  -1166.628747	      0.412362	     -0.684936	    552.673525	      0.454557	      0.000000	   1516.856661	   -902.026344	     51.377615	    -16.060285	    569.791146
            53	          1025	         10000	           208	        random	   6357.282632	  -1195.810086	  -1221.678084	     -0.515549	     -0.614276	    616.668677	     -0.528496	      0.000000	   1457.481870	   -944.457169	     63.329753	    -13.348168	    629.444371
            54	          1025	         10000	           223	        random	   6418.217543	  -1249.517159	  -1276.875417	     -0.328821	     -0.577229	    679.470440	     -0.419311	      0.000000	   1397.619265	   -986.642290	     78.347545	     -8.721486	    696.169130
            55	          1025	         10000	           239	        random	   6609.105240	  -1305.576905	  -1340.428951	     -0.055700	     -0.847673	    746.410584	     -0.113228	      0.000000	   1328.894753	  -1037.014418	     87.857639	    -10.851732	    769.313374
            56	          1025	         10000	           256	        random	   6798.463330	  -1378.507546	  -1408.338170	     -0.816933	     -1.070169	    832.196802	     -0.788036	      0.000000	   1255.422925	  -1091.514477	     95.783221	    -15.920632	    847.945619
            57	          1025	         10000	           274	        random	   6953.479031	  -1431.870201	  -1473.887643	     -1.485831	     -1.041496	    918.387426	     -1.436474	      0.000000	   1184.157398	  -1143.241251	    108.894432	    -17.270171	    933.320280
            58	          1025	         10000	           294	        random	   7049.642934	  -1481.555754	  -1536.846881	     -0.971005	     -0.915701	   1014.825060	     -1.057723	      0.000000	   1115.084055	  -1191.217928	    131.538810	    -11.539599	   1031.640007
            59	          1025	         10000	           315	        random	   7229.677679	  -1553.916973	  -1609.056504	     -1.921020	     -1.213760	   1123.073822	     -1.932829	      0.000000	   1036.168816	  -1248.624214	    146.778956	    -15.019496	   1136.153692
            60	          1025	         10000	           338	        random	   7352.333589	  -1614.257464	  -1677.751244	     -1.222066	     -1.474220	   1237.843938	     -1.316015	      0.000000	    960.337058	  -1301.596234	    171.718115	    -11.717386	   1255.069139
            61	          1025	         10000	           362	        random	   7511.961141	  -1681.281650	  -1749.571605	     -2.941972	     -0.849078	   1362.305510	     -3.108120	      0.000000	    881.138749	  -1357.659419	    195.229309	    -12.057670	   1379.791271
            62	          1025	         10000	           388	        random	   7687.281853	  -1746.702524	  -1822.912143	     -1.405199	     -2.058249	   1504.515365	     -1.490128	      0.000000	    799.624853	  -1414.874833	    222.360805	    -12.249361	   1521.874208
            63	          1025	         10000	           416	        random	   7815.809922	  -1814.508637	  -1895.650696	     -4.318824	     -0.523167	   1660.396655	     -4.671588	      0.000000	    718.445441	  -1470.707431	    255.340514	    -10.446609	   1673.794775
            64	          1025	         10000	           446	        random	   7998.127345	  -1868.496881	  -1971.559979	     -3.605146	     -1.777859	   1827.619982	     -3.838526	      0.000000	    633.181423	  -1529.685080	    289.340952	    -11.906059	   1845.082045
            65	          1025	         10000	           478	        random	   8142.697226	  -1919.098043	  -2045.060428	     -5.388509	     -0.459501	   2012.574083	     -5.800071	      0.000000	    549.945704	  -1585.797093	    330.081541	    -11.392958	   2029.690567
            66	          1025	         10000	           512	        random	   8297.668738	  -1988.843194	  -2116.135746	     -5.040220	     -1.327665	   2209.010817	     -5.581211	      0.000000	    468.535417	  -1639.341885	    376.981387	    -10.039743	   2233.407978
            67	          1025	         10000	           549	        random	   8470.165415	  -2026.741312	  -2188.629334	     -5.345975	     -0.576259	   2435.961121	     -5.889349	      0.000000	    384.781607	  -1693.798872	    429.133379	    -10.867592	   2460.642471
            68	          1025	         10000	           588	        random	   8616.081056	  -2079.988552	  -2254.478294	     -5.095540	     -1.529673	   2680.872883	     -5.918133	      0.000000	    307.211103	  -1741.555411	    490.895782	     -8.992020	   2707.297934
            69	          1025	         10000	           630	        random	   8763.306684	  -2121.014330	  -2317.678232	     -5.732093	      0.146728	   2955.890860	     -6.912263	      0.000000	    231.440701	  -1786.343621	    560.982541	     -8.327510	   2979.869551
            70	          1025	         10000	           676	        random	   8920.676628	  -2113.460343	  -2378.154721	     -6.697157	      0.841812	   3259.711957	     -7.980927	      0.000000	    157.148994	  -1827.931675	    641.787933	     -9.882786	   3288.027992
            71	          1025	         10000	           724	        random	   9062.480565	  -2159.785278	  -2426.889194	     -6.160585	      0.499773	   3596.355070	     -7.711212	      0.000000	     94.113181	  -1858.098881	    735.759460	     -9.616248	   3622.118562
            72	          1025	         10000	           776	        random	   9185.428366	  -2151.620148	  -2463.267172	     -5.680020	      1.308905	   3962.938874	     -8.500459	      0.000000	     42.492596	  -1875.439383	    849.037339	     -7.987076	   3996.710799
            73	          1025	         10000	           832	        random	   9301.533696	  -2128.880477	  -2484.572041	     -4.588727	      4.873422	   4382.091791	     -8.621416	      0.000000	      4.721899	  -1877.707254	    983.216696	     -7.966737	   4418.635040
            74	          1025	         10000	           891	        random	   9390.431755	  -2056.657948	  -2481.310093	     -2.125496	      3.375026	   4845.943932	     -8.779507	      0.000000	     -9.230718	  -1855.706888	   1144.530485	     -6.902904	   4886.845033
            75	          1025	         10000	           955	        random	   9445.008893	  -1945.339396	  -2439.769696	      3.317167	     18.980278	   5384.073483	     -9.459859	      0.000000	     13.731851	  -1795.284271	   1351.230309	     18.031972	   5431.439464
            76	          1025	         10000	          1024	        random	   9402.876694	  -1725.001153	  -2298.503900	     39.682188	     50.415793	   6059.394167	    -11.275151	      0.000000	    135.069723	  -1635.199033	   1663.715598	    140.070992	   6113.656079
            77	          1025	         10000	          1097	        random	   9398.808822	  -1873.738944	  -2293.049733	     42.787949	     54.370056	   6072.564336	    -11.582107	      0.000000	    140.236208	  -1629.482754	   1671.601739	    145.244796	   6126.934392
}\loadedtable
\hspace{1ex}%
\begin{minipage}{0.7\textwidth}
\begin{tikzpicture}[baseline]
\begin{semilogxaxis}[
  legend to name=leg2,
  footnotesize,
  xlabel={used alphabet size $m=|\A|$},
  xmax=1125,
  ylabel={$\CL_Q-\CL_{\text{\gSbXt}}$},
  ymin=-3000, ymax=3000,
  width=\textwidth - \widthof{-1000},
  height=0.32\textheight,
  y coord trafo/.code=\pgfmathparse{((#1>0)-(#1<0))*(ln(abs(#1)+0.01/1.0)-ln(0.01/1.0))},
  ytick={-100,-10,-1,-0.1,-0.01,0,0.01,0.1,1,10,100},
  yticklabels={-10000,-1000,-100,-10,-1,0,1,10,100,1000,10000},
  yticklabel style={text width=\widthof{-1000}, inner sep=1pt, xshift=0.7ex, align=center},
  extra y ticks={-900,-800,...,-200,-90,-80,...,-20,-9,-8,...,-2,-0.9,-0.8,...,-0.1,-0.09,-0.08,...,-0.01,-0.009,-0.008,...,-0.001,0.001,0.002,...,0.009,0.01,0.02,...,0.09,0.1,0.2,...,0.9,2,3,...,9,20,30,...,90,200,300,...,900},
  extra y tick labels={},
  every extra y tick/.style={major tick length=3pt},
  ymin=-50,ymax=50,
  restrict y to domain=-1000:1000,
]
\addplot[UsedAlphm] table [x={ms}, y expr=\thisrow{ms}/100]          {\loadedtable}; \addlegendentry{\gUsedAlphm}
\addplot[SbXt]      table [x={ms}, y expr=\thisrow{SbXt}/100]        {\loadedtable}; \addlegendentry{\gSbXt}
\addplot[KTX]       table [x={ms}, y expr=\thisrow{KTX}/100]         {\loadedtable}; \addlegendentry{\gKTX}
\addplot[Perks]     table [x={ms}, y expr=\thisrow{Perks}/100]       {\loadedtable}; \addlegendentry{\gPerks}
\addplot[SSDC]      table [x={ms}, y expr=\thisrow{SSDC}/100]        {\loadedtable}; \addlegendentry{\gSSDC}
\addplot[DirMXt]    table [x={ms}, y expr=\thisrow{DirMXt}/100]      {\loadedtable}; \addlegendentry{\gDirMXt}
\addplot[SAWBayes]  table [x={ms}, y expr=\thisrow{SAWBayes}/100]    {\loadedtable}; \addlegendentry{\gSAWBayes}
\addplot[SbX]       table [x={ms}, y expr=\thisrow{SbX}/100]         {\loadedtable}; \addlegendentry{\gSbX}
\addplot[KTACLA]    table [x={ms}, y expr=\thisrow{KTACLA}/100]      {\loadedtable}; \addlegendentry{\gKTACLA}
\addplot[DirMX]     table [x={ms}, y expr=\thisrow{DirMX}/100]       {\loadedtable}; \addlegendentry{\gDirMX}
\addplot[KTAo]      table [x={ms}, y expr=\thisrow{KTAo}/100]        {\loadedtable}; \addlegendentry{\gKTAo}
\addplot[LLto]      table [x={ms}, y expr=\thisrow{LLto}/100]        {\loadedtable}; \addlegendentry{\gLLto}
\addplot[Ho]        table [x={ms}, y expr=\thisrow{Ho}/100]          {\loadedtable}; \addlegendentry{\gHo}
\end{semilogxaxis}
\end{tikzpicture}
\end{minipage}

\vspace{3ex}
\pgfplotstableread{
     NiceOrder     Record.Nr	  Seq.Length.n	    Tot.Alph.s	   Used.Alph.m	         Theta	      filename	          base	          LLto	            Ho	         DirMX	        DirMXt	          SSDC	           SbX	          SbXt	           KTX	          KTAo	        KTACLA	      SAWBayes	         Perks
             1             1	        111261	           256	            81	          file	           bib	 401584.648676	   -507.200297	   -507.200297	     16.341412	     -7.477210	     98.823090	     13.756163	      0.000000	    455.773353	   -177.930092	    -21.079740	    -17.116901	    116.151287
             2             2	        768771	           256	            82	          file	         book1	2412942.207284	   -553.982178	   -553.982178	     21.330720	     -6.217114	    120.653783	     17.186011	      0.000000	    655.113585	   -142.734824	     14.873594	     19.811136	    107.717473
             3            12	         71646	           256	            87	          file	         progl	 237396.657435	   -508.398372	   -508.398372	     -2.880266	      5.033692	     40.144717	     -7.373424	      0.000000	    396.420380	   -176.443352	    -15.310194	    -11.967860	    114.300166
             4            13	         49379	           256	            89	          file	         progp	 167137.374946	   -494.324159	   -494.324159	      9.482922	     -0.780126	    108.240504	      5.108808	      0.000000	    362.520310	   -171.901368	     -9.480320	     -6.691316	    135.773923
             5             9	         82199	           256	            91	          file	        paper2	 262667.851659	   -496.484268	   -496.484268	     15.364411	     -0.896960	    142.181418	      9.343806	      0.000000	    424.608769	   -144.786244	     18.854116	     22.201948	    140.356216
             6            11	         39611	           256	            92	          file	         progc	 143258.278905	   -512.779525	   -512.779525	     11.486824	     -3.857504	    112.651896	      7.974503	      0.000000	    314.825509	   -191.046529	    -26.822012	    -24.520144	    149.095998
             7             8	         53161	           256	            95	          file	        paper1	 184130.446065	   -515.526684	   -515.526684	     12.667718	     -1.273323	    128.061898	      7.262244	      0.000000	    348.636696	   -170.743815	     -4.867856	     -2.250252	    153.647994
             8             3	        610856	           256	            96	          file	         book2	2029910.156499	   -646.575774	   -646.575774	     15.735733	     -2.572521	     88.235559	     10.600625	      0.000000	    528.190765	   -182.879291	    -16.486276	    -11.804470	    135.894078
             9             5	        377109	           256	            98	          file	          news	1357182.560683	   -654.165632	   -654.165632	     20.103774	     -7.920020	    104.451951	     15.934677	      0.000000	    458.423369	   -205.098413	    -37.720999	    -33.311171	    150.356032
            10            14	         93695	           256	            99	          file	         trans	 359900.687682	   -577.407858	   -577.407858	     16.299603	     -5.556326	    126.628892	     11.495639	      0.000000	    357.469410	   -192.320634	    -24.475744	    -21.263469	    162.614376
            11            10	        513216	           256	           159	          file	           pic	 431225.706079	   -724.707383	   -724.707383	     -4.361377	     49.507171	     77.487401	    -46.034724	      0.000000	    407.694022	     -5.567004	    161.326474	    164.864666	    275.779714
            12             6	         21504	           256	           256	          file	          obj1	  89401.166556	   -741.085922	   -741.085922	    124.728236	    146.022164	    475.205506	    -27.590286	      0.000000	    -47.501909	    -47.501909	    -47.501909	    -41.956731	    883.687744
            13             7	        246814	           256	           256	          file	          obj2	1072060.271824	  -1044.147147	  -1044.147147	    197.080159	    156.127197	    491.925691	     19.001535	      0.000000	    -40.268107	    -40.268107	    -40.268107	    -34.722930	    871.943569
            14             4	        102400	           256	           256	          file	           geo	 401678.474400	   -908.483600	   -908.483600	    213.922101	    116.784191	    627.273331	     37.117750	      0.000000	    -16.732470	    -16.732470	    -16.732470	    -11.187292	    918.591762
}\loadedtable
\begin{minipage}{\textwidth}
\begin{tikzpicture}[baseline]
\begin{axis}[
  footnotesize,
  xlabel={Calgary Corpus (from small to large $|\A|$)},
  xtick={1,...,18},
  xticklabel style={rotate=90,anchor=east},
  xticklabels from table={\loadedtable}{filename},
  ymin=-3000, ymax=3000,
  width=\textwidth - \widthof{-1000} - 21ex,
  height=0.32\textheight,
  y coord trafo/.code=\pgfmathparse{(#1>0)-(#1<0))*(ln(abs(#1)+0.01/1.0)-ln(0.01/1.0))},
  ytick={-100,-10,-1,-0.1,-0.01,0,0.01,0.1,1,10,100},
  yticklabels={-10000,-1000,-100,-10,-1,0,1,10,100,1000,10000},
  extra y ticks={-900,-800,...,-200,-90,-80,...,-20,-9,-8,...,-2,-0.9,-0.8,...,-0.1,-0.09,-0.08,...,-0.01,-0.009,-0.008,...,-0.001,0.001,0.002,...,0.009,0.01,0.02,...,0.09,0.1,0.2,...,0.9,2,3,...,9,20,30,...,90,200,300,...,900},
  extra y tick labels={},
  every extra y tick/.style={major tick length=3pt},
  ymin=-50,ymax=50,
  restrict y to domain=-1000:1000,
]
\addplot[UsedAlphm] table [x={NiceOrder}, y expr=\thisrow{Used.Alph.m}/100] {\loadedtable}; \addlegendentry{\gUsedAlphm}
\addplot[SbXt]      table [x={NiceOrder}, y expr=\thisrow{SbXt}/100]        {\loadedtable}; \addlegendentry{\gSbXt}
\addplot[KTX]       table [x={NiceOrder}, y expr=\thisrow{KTX}/100]         {\loadedtable}; \addlegendentry{\gKTX}
\addplot[Perks]     table [x={NiceOrder}, y expr=\thisrow{Perks}/100]       {\loadedtable}; \addlegendentry{\gPerks}
\addplot[SSDC]      table [x={NiceOrder}, y expr=\thisrow{SSDC}/100]        {\loadedtable}; \addlegendentry{\gSSDC}
\addplot[DirMXt]    table [x={NiceOrder}, y expr=\thisrow{DirMXt}/100]      {\loadedtable}; \addlegendentry{\gDirMXt}
\addplot[SAWBayes]  table [x={NiceOrder}, y expr=\thisrow{SAWBayes}/100]    {\loadedtable}; \addlegendentry{\gSAWBayes}
\addplot[SbX]       table [x={NiceOrder}, y expr=\thisrow{SbX}/100]         {\loadedtable}; \addlegendentry{\gSbX}
\addplot[KTACLA]    table [x={NiceOrder}, y expr=\thisrow{KTACLA}/100]      {\loadedtable}; \addlegendentry{\gKTACLA}
\addplot[DirMX]     table [x={NiceOrder}, y expr=\thisrow{DirMX}/100]       {\loadedtable}; \addlegendentry{\gDirMX}
\addplot[KTAo]      table [x={NiceOrder}, y expr=\thisrow{KTAo}/100]        {\loadedtable}; \addlegendentry{\gKTAo}
\addplot[LLto]      table [x={NiceOrder}, y expr=\thisrow{LLto}/100]        {\loadedtable}; \addlegendentry{\gLLto}
\addplot[Ho]        table [x={NiceOrder}, y expr=\thisrow{Ho}/100]          {\loadedtable}; \addlegendentry{\gHo}
\end{axis}
\end{tikzpicture}
\end{minipage}
\caption{\label{fig:results}Plotted are code length differences to
\gSbXt\ of various estimators. The two top graphs are for fixed
sequence length $n=1024$ and total alphabet size $D=10\,000$ for
varying Zipf exponents $\gamma$ and used alphabet sizes $m=|\A|$.
The bottom graph is for the 14 files from the Calgary corpus with
$21504\leq n\leq 768\,771$ and byte alphabet ($D=256$). The
online/offline/oracle estimators have solid/dashed/dotted lines. A
curve above/below zero means worse/better than \gSbXt. The black
dotted curve is not a code length but shows the used alphabet size
$m\equiv|\A|$. }
\end{figure*}

I determined the code length of various estimators for various
sequence lengths $n$, used alphabet sizes $m$, and base alphabet
sizes $D$ on artificially generated data sequences and the Calgary
corpus.
I consider the new estimator $S$ and the Dirichlet-multinomial with
approximately optimal constant $\b^*\!\!/2$ and variable
$\vbs\!/2$ and with Perks prior, the KT estimator for the base and
for the used alphabet, and Bayesian sub-alphabet weighting,
introduced in Section~\ref{sec:Compare}. I also compare against the
true distribution and the empirical entropy.

\paradot{Data generation}
I sampled $\t_1,...,\t_m$ uniformly from the $m-1$-dimensional
probability simplex and set $\t_{m+1}=...=\t_D=0$. I then sampled
$x_{1:n}$ from $P^{\v\t}_{iid}$. Unless $n\gg m$ or $D\gg n$, this
usually results in sequences that actually contain less than $m$
symbols, and e.g.\ $|\A|=n$ is virtually impossible to achieve in
this way.
%
I therefore generate sequences by first setting $x_t=t$ for
$t=1...\min\{m,n\}$, then sample the remaining $x_t$ from
$P^{\v\t}_{iid}$, and then scramble the result. The resulting code
lengths were virtually indistinguishable from the ``normal''
i.i.d.\ sampling, when the latter was also feasible.

I also generated sequences with a version of D'Hondt's method for
allocating seats in party-list proportional representation, which
ensures $|n_i-\t_i\cdot n|<1$ and adapted it to also ensure $n_i>0$
if $\t_i>0$ and $i\leq n$ by dividing by zero (rather than 1)
first. As expected, the results were a bit less noisy, but
otherwise very similar.

In another experiment I chose $\v\t$ to be Zipf-distributed, i.e.\
$\t_i\propto i^{-\gamma}$ with varying Zipf exponent $\gamma>0$,
which for $\gamma\approx 1$ mimics quite well the empirical
distribution of words in English texts. The larger $\gamma$, the
smaller the used alphabet $\A$.

\paradot{My $S$-estimators}
I determined the code length of my models (\gSbX and \gSbXt) with
constant and variable optimal $\b^*$. I chose uniform normalized
weights $w^t_i=1/(D-m_t)$.
I also played around with other $\b$ and $\vec\b$, but performance
either severely deteriorated, or only marginally and locally
improved. The code length is very sensitive to some changes, e.g.\
$\b=m/\ln n$ and $\b=m/\ln{2n\over m}$ perform badly for large $m$,
since these $\b$ have the wrong scaling for $m\to n$, but less
sensitive to other changes, e.g.\ $\b=(m+c)/\ln{n+c'\over m+c}$ for
small $c,c'$ are generally ok. For the experiments I used
$\b^{c=2}=\b^*\!\!/2$ and $\b^{c=2}_t=\b^*_t/2$.

\paradot{Other estimators}
I also determined the code length of the other estimators
discussed in Section~\ref{sec:Compare}.
I considered: \\
(i) the Dirichlet-multinomial with $\v\a=\v 1/D$ (\gPerks) and
optimized constant $\v\a^*$ (\gDirMX) and optimal variable
$\vec{\v\a}^*$ (\gDirMXt) with uniform weights \req{eq:astar};
\\
(ii) the \KT-estimator with base alphabet $\X$ (\gKTX), %
\\
(iii) the \KT-estimator for used alphabet $\A$ (\gKTAo), %
a feasible off-line version by pre-coding $\A$ (\gKTACLA), %
and the online version using escape probability $\frs{1}{t+1}$
(\gSSDC) discussed in Section~\ref{sec:Compare};
\\
(iv) the Bayesian sub-alphabet weighting (\gSAWBayes) discussed in
Appendix~\ref{app:SAW};
\\
(v) the empirical entropy $n H({\v n\over n})=\sum_i n_i\ln{n\over
n_i}$ (\gHo);
\\
(vi) the log-likelihood of the sampling distribution $\ln
1/P^{\v\t}_{iid}$ (\gLLto) for artificial data.

\paradot{Results}
Figure \ref{fig:results} plots the results for the various
estimators. The vertical axis is the code length (or redundancy)
difference of the estimator under consideration and our prime model
\gSbXt. So negative/positive values indicate better/worse
performance than \gSbXt.
The two top graphs are for artificially generated data with fixed
sequence length $n=1024$ and total alphabet size $D=10\,000$. In
the right graph I varied $m=1,2,4,...2^{10}$ and in the left graph
I varied the Zipf exponent $\gamma\in[0;2]$. The bottom graph shows
results for the 14 files from the Calgary corpus with byte alphabet
($D=256$).
All results are plotted and discussed relative to \gSbXt.
Rather than averaging over multiple runs and plotting error bars
for the artificial data, I generated (necessarily) one new sequence
for each $\gamma$ and $m$ for sufficiently many $\gamma$ and $m$.
The noise level of the curves captures the sample variation very
well.

\paradot{Discussion}
The results generally confirm the theory with few/small surprises.

The online estimators are plotted with solid lines. \gDirMXt\
mostly coincides within $\pm 10$ nits with \gSbXt\ for most $m$.
Only when $m$ approached $n$ is \gSbXt\ superior to \gDirMXt\ due
to renormalized weights leading to shorter $\CL_w(\A)$.
Among the proper estimators, \gSAWBayes\ works best by a small
margin, except for very small ($m\lesssim\ln n$) and very large
($m\approx n$) used alphabet and Zipf distributed data, but note
that it is $D$ (here $10\,000$ or 256) times slower than all the
other algorithms. \gSSDC\ is virtually indistinguishable from
\gPerks\ on the artificial data and only slightly better on the
real data. Both perform poorly except for very small $m\lesssim\ln
n$. Note that \gPerks\ performs as well as \gDirMX\ (only) around
$m\approx 2\ln{n\over m}$, i.e.\ when their priors coincide.
\gKTX\ as well as $\DirM^{\v\a}$ with any other fixed choice of
$\v\a$ perform very badly, especially for small $m$. \gKTX\
performs well only for $m\approx D$ and for $m\approx 0.9n$ when
$\b^*\!\!/2$ is accidentally close to $\a_+=D/2$.

The offline estimators (densely dashed lines), \gDirMX, \gSbX\ with
constant optimal parameters $\v\a^*$ and $\b^*$ mostly coincide
within $\pm 10$ nits with their variable $\vec{\v\a}^*$ and $\vbs$
online versions, except for very large $m$ they are slightly
better. This shows that making them online is essentially for free,
which is consistent with the close bounds for small $m$ in both
cases. This has been observed for other offline-online algorithm
pairs as well \cite{Hutter:05expertx}. There is very little gain in
knowing $\v\a^*$ or $\b^*$ in advance.
As expected off-line \gKTACLA\ significantly improves upon \gKTX\
for small $m$ and even beats \gSbXt\ by a couple of bits for
sufficiently small $m$, but breaks down for medium and large $m$,
and anyway is off-line.

These observations are rather consistent across uniform, Zipf, and
real data. Only for Zipf data, \gSAWBayes\ and \gKTACLA\ seem to be
worse, and the relative performance of many estimators on b\&w fax
{\em pic} is reversed.

The oracle estimators (dotted lines) possess significant extra
knowledge: \gKTAo\ the used alphabet $\A$, and \gLLto\ and \gHo\
even the counts $\v n$. The plots show the magnitude of this extra
knowledge.

\paradot{Summary}
Results are similar for other $(n,D,m)$ and $(n,D,\gamma)$
combinations but code length differences can be more or less
pronounced but are seldom reversed. In short, %
\gKTX\ performs very poorly unless $m\approx D$, and %
\gPerks\ and \gSSDC\ perform poorly unless $m\lesssim\ln n$; %
\gKTACLA, \gDirMX, \gSbX\ are not online; %
the oracles \gLLto, \gHo, \gKTAo\ are not realizable; %
and \gSAWBayes\ is extremely slow; %
which leaves \gDirMXt\ and \gSbXt\ as winners. %
They perform very similar unless $m$ gets very close to
$\min\{n,D\}$ in which case \gSbXt\ wins.

\section{List of Notation}\label{app:Notation}

\begin{tabbing}
  \hspace{0.13\textwidth} \= \hspace{0.73\textwidth} \= \kill
  {\bf Symbol }      \> {\bf Explanation}                                                    \\[0.5ex]
  $\X$               \> total (large) base alphabet of size $D$                             \\[0.5ex]
  $D=|\X|$           \> size of (large) base alphabet $\X$                                  \\[0.5ex]
  $n$                \> sequence length                                                      \\[0.5ex]
  $x_{1:n}$          \> total sequence                                                       \\[0.5ex]
  $n_i$              \> number of times $i$ appears in $x_{1:n}$                             \\[0.5ex]
  $\A\subseteq\X$    \> symbols actually appearing in sequence $x_{1:n}$                     \\[0.5ex]
  $m=|\A|$           \> size of alphabet used in $x_{1:n}$                                   \\[0.5ex]
  $i,j,k$            \> indices ranging over symbols in $\X$, $\A$, $\X\setminus\A$ respectively \\[0.5ex]
  $\nb:={n\over m},\nu:={m\over n}$ \> \hspace{5ex} average multiplicity of symbols and its inverse \\[0.5ex]
  $t$                \> current time ranging from $0$ to $n-1$                               \\[0.5ex]
  $x_{1:t}$          \> sequence seen so far                                                 \\[0.5ex]
  $\A_t$             \> $=\{x_1,...,x_t\}$ = symbols seen so far                             \\[0.5ex]
  $m_t=|\A_t|$       \> number of different symbols observed so far (in $x_{1:t}$)           \\[0.5ex]
  $x_{t+1}$          \> next symbol to be predicted                                          \\[0.5ex]
  $n^t_i$            \> number of times $i$ appears in $x_{1:t}$                             \\[0.5ex]
  $\newx$            \> set of $t$ for which $x_{t+1}$ is new, i.e.\ $x_{t+1}\not\in\A_t$    \\[0.5ex]
  $\oldx$            \> set of $t$ for which $x_{t+1}$ is old, i.e.\ $x_{t+1}\in\A_t$        \\[0.5ex]
  $P,Q$              \> probability over sequences                                           \\[0.5ex]
  $P^{\rm param}_{\rm name}$ \> parameterized and named probability                          \\[0.5ex]
  $R^{\rm param}_{\rm name}$ \> $= -\ln P^{\rm param}_{\rm name} - n\cdot H(\v n/n)$ = redundancy of $P^{\rm param}_{\rm name}$ \\[0.5ex]
  $\overline R,\underline R$      \> upper/lower bound on redundancy                         \\[0.5ex]
  $\CL$              \> code length in nits                                                  \\[0.5ex]
  $\t_i$             \> probability that $x_t=i$                                             \\[0.5ex]
  $\a_i,\a_+$        \> Dirichlet parameters and their sum                                   \\[0.5ex]
  $\b=\b_n,\b_t$     \> general (constant,variable) parameter $\b$                           \\[0.5ex]
  $\b^*\neq\b^*_n,\b^*_t$ \> optimal (constant,variable) parameter $\b$                      \\[0.5ex]
  $w^t_i$            \> weight of new symbol $i$ at time $t$                                 \\[0.5ex]
  $\ln$              \> Natural logaritm. Results are in `nits'                              \\[0.5ex]
  $\v v$             \> vector over alphabet $\X$                                            \\[0.5ex]
  $\vec v$           \> vector over time $t=0...n-1$                                         \\[0.5ex]
  $\G,\Psi$          \> Gamma and diGamma function                                           \\[0.5ex]
  $c$                \> constant $>0$ and $<\infty$                                          \\[-2ex]
\end{tabbing}

\end{document}